\documentclass[a4paper,reqno,10pt]{amsart}

\usepackage[left=3cm,right=3cm,top=4cm,bottom=3cm]{geometry}
\usepackage[T1]{fontenc}
\usepackage[utf8]{inputenc}
\usepackage{lmodern}
\usepackage{amsmath,amssymb,amsthm,mathtools}
\usepackage{microtype}
\usepackage{xcolor}
\usepackage{hyperref}

\numberwithin{equation}{section}

\hypersetup{
 colorlinks=true,
 linkcolor=blue,
 citecolor=blue,
 urlcolor=blue
}

\newtheorem{theorem}{Theorem}[section]
\newtheorem{proposition}[theorem]{Proposition}
\newtheorem{lemma}[theorem]{Lemma}
\newtheorem{corollary}[theorem]{Corollary}
\theoremstyle{definition}
\newtheorem{definition}[theorem]{Definition}
\newtheorem{remark}[theorem]{Remark}

\newcommand{\R}{\mathbb{R}}
\newcommand{\C}{\mathbb{C}}
\newcommand{\Om}{\Omega}
\newcommand{\Dom}{\operatorname{dom}}
\newcommand{\Ran}{\operatorname{ran}}
\newcommand{\Ker}{\operatorname{ker}}
\newcommand{\reg}{\mathrm{reg}}
\newcommand{\dd}{\mathrm{d}}
\newcommand{\pa}{\partial}
\newcommand{\la}{\lambda}
\newcommand{\tr}{\operatorname{tr}}
\newcommand{\dist}{\operatorname{dist}}
\newcommand{\diag}{\operatorname{diag}}
\newcommand{\bftau}{\boldsymbol\tau}

\begin{document}

\title[Robin Laplacians with point interactions]
{Robin Laplacians with point interactions on unbounded domains:
discrete spectrum and coupling asymptotics}

\author[D.~Noja]{Diego Noja}
\address[D.~Noja]{Department of Mathematics and Applications,
University of Milano--Bicocca, Via Roberto Cozzi 55, 20126 Milano, Italy}
\email{diego.noja@unimib.it}

\author[F.~Raso Stoia]{Francesco Raso Stoia}
\address[F.~Raso Stoia]{Independent researcher}
\email{f.rasostoiawork@gmail.com}

\date{\today}

\subjclass[2020]{Primary 35J10, 81Q10; Secondary 47B25, 35P05}

\keywords{Point interactions; Robin Laplacian; boundary triples; Weyl matrix;
critical couplings; coupling asymptotics; unbounded domains}

\begin{abstract}
We investigate the discrete spectrum of finitely many point interactions for
Neumann and Robin Laplacians on special unbounded and exterior $C^{1,1}$
domains in dimensions two and three. The operators are realized as
self-adjoint extensions through an ordinary boundary triple whose gamma field
and Weyl matrix are constructed from the Robin Green kernel. Eigenvalues below
the background spectrum are characterized by the Weyl matrix, yielding an
exact finite-dimensional counting formula. If the background operator is
non-negative, this also gives the number of negative eigenvalues without
assuming a finite zero-energy limit of the Weyl matrix. In the one-centre case
we identify the critical coupling and prove that the unique eigenvalue branch
is real analytic, strictly increasing, and strictly concave. For
scalar multicentre couplings $\Theta=\alpha I_N$, sufficiently
strong attraction produces exactly $N$ eigenvalues below the background
spectrum, and all of them have universal leading asymptotics coinciding with
the whole-space laws. If the bottom of the background spectrum is an isolated
eigenvalue, the branch exists for every finite coupling and we determine its
leading decoupling asymptotics as the coupling tends to $+\infty$.
Explicit exterior-sphere and exterior-disk models illustrate the critical
couplings and their threshold behaviour.
\end{abstract}

\maketitle


\section{Introduction}
Point interactions are a well-known class of singular perturbations of the
Laplacian. Unlike the elementary one-dimensional case, in dimensions two and
three the Dirac measure is not an ordinary form-bounded potential on $H^1$, and
the model must instead be defined through self-adjoint extension theory or
suitably renormalized quadratic forms; see
\cite{AGHH,Posilicano2008,BrueningGeylerPankrashkin2008,GM-SelfAdj_book-2023,DerezinskiPoint}.
The Green function of the domain plays a central role. On a proper subdomain of
$\mathbb R^d$ its leading singularity remains universal, whereas its regular part
contains the dependence on the geometry and on the boundary condition (compare
\cite[Theorems~15--16]{BrueningGeylerPankrashkin2005}). This makes several
spectral-theoretic problems both natural and substantially less explicit than
in the whole space. The literature on point interactions in domains with
boundary is comparatively limited. Point interactions in bounded domains were
studied, among others, in
\cite{BlanchardFigariMantile,ExnerMantile,LotoMiche21}; for point interactions
in quantum waveguides, see \cite{EK2015}.
Resonances and spectral theory for the half-space were treated in
\cite[Propositions~3.1--3.5]{NojaRasoStoia2025}, while the Dirichlet problem on more general unbounded
domains was recently analysed in \cite{NojaRasoStoia2026}, see also references therein.
The purpose of this paper is to continue the analysis in the latter paper by giving a unified
extension-theoretic construction and a detailed description of the spectrum
below the background threshold. We work in dimensions $d=2,3$ on exterior $C^{1,1}$ domains (compact boundary) or unbounded
special $C^{1,1}$ domains (unbounded boundary). The Robin
coefficient is an arbitrary real function
$\beta\in L^\infty(\partial\Omega)$, and the interaction set
$Y=\{y_1,\ldots,y_N\}$ is finite. As in the whole-space model, a
self-adjoint matrix $\Theta$ parametrizes the point-interaction operators
$H^\beta_{\Theta,Y}$, while $H_\beta$ denotes the background Robin
Laplacian.

The Green functions $G^\beta_{z,y_j}$ of $H_\beta-z$ and their regular
parts are constructed in Theorem~\ref{thm:robin-kernel}; the associated
gamma field $\gamma_\beta(z)$ and Weyl matrix function $M^\beta_Y(z)$ are
introduced in Section~\ref{sec:preliminaries}. They determine an ordinary
boundary triple for the restriction of $H_\beta$ to functions vanishing on
$Y$. In Section~\ref{sec:operator} this gives a concise realization of
$H^\beta_{\Theta,Y}$ by the local condition
$\psi_{\reg,Y}=\Theta q$. Corollary~\ref{cor:krein-form} then yields the
Kre\u{\i}n resolvent formula, the finite-rank resolvent difference, equality
of the essential spectra, and the eigenvalue identity
\begin{equation*}
 \Ker(H^\beta_{\Theta,Y}-E)
 =
 \gamma_\beta(E)\Ker\bigl(\Theta-M^\beta_Y(E)\bigr),
 \qquad E\in\rho(H_\beta)\cap\mathbb R.
\end{equation*}
Thus every eigenvalue below the background spectrum is detected by a
finite-dimensional Hermitian matrix.

The strict energy monotonicity of the Weyl matrix is given by
\eqref{eq:weyl-derivative}. It yields the exact counting formula
\begin{equation*}
 \dim\Ran\mathbf 1_{(-\infty,E_0)}(H^\beta_{\Theta,Y})
 =n_-\bigl(\Theta-M^\beta_Y(E_0)\bigr),
 \qquad E_0<\inf\sigma(H_\beta),
\end{equation*}
and, in particular, the bound of $N$ on the number of eigenvalues below the
background spectrum; see Corollary~\ref{cor:multicenter-count}. If
$H_\beta\ge0$, the same corollary gives, without an additional threshold
hypothesis,
\begin{equation*}
 \dim\Ran\mathbf 1_{(-\infty,0)}(H^\beta_{\Theta,Y})
 =
 \lim_{\kappa\downarrow0}
 n_-\bigl(\Theta-M_Y^\beta(-\kappa^2)\bigr).
\end{equation*}
If $M_Y^\beta(-\kappa^2)$ converges in $\mathbb C^{N\times N}$ as
$\kappa\downarrow0$, its finite Hermitian limit is denoted by
$M_Y^\beta(0)$, and the right-hand side is
$n_-(\Theta-M_Y^\beta(0))$.
In particular, if $\Theta=\alpha I_N$ and
$\mu_1\leq\cdots\leq\mu_N$ are the eigenvalues of the finite matrix
$M_Y^\beta(0)$, Corollary~\ref{cor:multicenter-count} gives
\begin{equation*}
 \dim\Ran\mathbf 1_{(-\infty,0)}
 \bigl(H^\beta_{\alpha I_N,Y}\bigr)
 =
 \#\{j:\mu_j>\alpha\}.
\end{equation*}
Hence, unlike the one-centre problem, where a single critical coupling
separates binding from non-binding, in the multicentre problem the number
of negative eigenvalues may change successively as $\alpha$ crosses the
distinct eigenvalues of $M_Y^\beta(0)$. These values are thresholds for
the negative-eigenvalue count; the spectral behaviour at zero itself
requires a separate threshold analysis, here not tackled.

For one centre the spectral picture below the background spectrum is complete.
Proposition~\ref{prop:single-center-spectrum} identifies the critical coupling
and proves that the eigenvalue, whenever it exists, is unique and simple.
Writing it as $E(\alpha)$, Corollary~\ref{cor:one-center-concavity} gives
\begin{equation*}
 \frac{\dd E}{\dd\alpha}
 =\frac{1}{\|G^\beta_{E(\alpha),y}\|_{L^2(\Omega)}^2}>0,
 \qquad
 \frac{\dd^2E}{\dd\alpha^2}<0,
\end{equation*}
with an explicit expression for the second derivative. Hence $E(\alpha)$ is real analytic, strictly increasing,
and strictly concave. For the scalar multicentre coupling
$\Theta=\alpha I_N$, Corollary~\ref{cor:strong-coupling} shows that sufficiently
strong attraction produces exactly $N$ eigenvalues below the background
spectrum. In both dimensions every one of these eigenvalues has the same
universal leading asymptotics as the whole-space one-centre model.
If the bottom of the background spectrum is an isolated
eigenvalue, Corollary~\ref{cor:isolated-background-ground-state} instead
describes the opposite, decoupling regime $\alpha\to+\infty$: the critical
coupling is then $+\infty$, the point-interaction eigenvalue exists for every
finite $\alpha$, and its distance from the background ground-state energy is
determined to leading order by the value of the background eigenfunction at
the interaction centre.

The paper is organized as follows. Section~\ref{sec:preliminaries} constructs
the Robin Green kernel, the gamma field, and the Weyl matrix function.
Section~\ref{sec:operator} gives the boundary-triple realization and the
Kre\u{\i}n formula. Section~\ref{sec:spectrum} contains the subthreshold
counting theorem, the one-centre analysis, and the coupling asymptotics.
Finally, Section~\ref{sec:models} gives explicit critical couplings for an
exterior sphere and an exterior disk. Except in these radial examples, a
classification of threshold states and an analysis of positive-energy
resonances are not attempted.

\section{Preliminaries}\label{sec:preliminaries}

\subsection{Geometric setting}
Throughout this work $d\in\{2,3\}$. We consider one of the following classes
of domains $\Omega\subset\R^d$:

\begin{itemize}
\item[(a)] a \emph{special $C^{1,1}$ domain},
\begin{equation*}
\begin{aligned}
\Omega&=\{(x',x_d)\in \R^{d-1}\times \R:\ x_d>\varphi(x')\},\\
\varphi&\in C^{1,1}(\R^{d-1}),\qquad \|\nabla\varphi\|_{L^\infty}<\infty;
\end{aligned}
\end{equation*}

\item[(b)] an \emph{exterior $C^{1,1}$ domain}, namely a connected open set
with compact complement,
\begin{equation*}
\Omega=\R^d\setminus K,
\end{equation*}
where $K\subset\R^d$ is compact and
$\partial\Omega=\partial K$ is of class $C^{1,1}$.
\end{itemize}

Both classes are examples of unbounded, quasi-conical domains, with respectively unbounded or bounded boundary.

\subsection{The classical Robin form}\label{sec:background}

Throughout the paper the scalar product in $L^2(\Om)$ is linear in the
first variable, and we use the same convention in $\C^N$. The duality
between $H^{-1/2}(\partial\Omega)$ and $H^{1/2}(\partial\Omega)$ is denoted
by $\langle\cdot,\cdot\rangle_{\partial\Omega}$.

Let $\Omega$ be a special $C^{1,1}$ domain or an exterior $C^{1,1}$ domain and $\beta\in L^\infty(\pa\Om)$ be real-valued. The Robin Laplacian $H_\beta$ is the self-adjoint operator in $L^2(\Om)$ associated with the closed semibounded form
\begin{equation*}
 a_\beta[u,v]
 :=
 \int_\Om \nabla u\cdot \overline{\nabla v}\,\dd x
 +\int_{\pa\Om}\beta\,\tr{u}\,\overline{\tr{v}}\,\dd\sigma,
 \qquad \Dom(a_\beta)=H^1(\Om).
\end{equation*}
For $u\in H^1(\Omega)$ such that $\Delta u\in L^2(\Omega)$, we denote by
$\tr_\nu u\in H^{-1/2}(\partial\Omega)$ its weak exterior normal derivative,
characterized by Green's identity
\begin{equation*}
 \int_\Omega \nabla u\cdot\overline{\nabla v}\,\dd x
 =
 \int_\Omega (-\Delta u)\overline v\,\dd x
 +\langle \tr_\nu u,\tr v\rangle_{\partial\Omega},
 \qquad v\in H^1(\Omega).
\end{equation*}
See \cite[Section~2]{BehrndtRohleder} for the trace spaces and weak normal
derivative on possibly unbounded Lipschitz domains.

The trace theorem and the boundedness of $\beta$ imply that the boundary term
is infinitesimally form-bounded with respect to the Neumann form. Hence
$a_\beta$ is densely defined, symmetric, closed, and bounded from below; see,
for example, \cite{Daners,GesztesyMitrea,BehrndtRohleder}. The Neumann case
corresponds to $\beta\equiv0$. The associated self-adjoint operator satisfies
\begin{equation*}
 \Dom(H_\beta)
 =\Bigl\{u\in H^1(\Om): -\Delta u\in L^2(\Om),\ \tr_{\nu}{u}+\beta\tr{u}=0\quad\text{on}\,\partial\Omega\Bigr\},
\end{equation*}
where the boundary condition is understood in $H^{-1/2}(\pa\Om)$. The operator acts as
$ H_\beta u=-\Delta u.$

Since $H_\beta$ is bounded from below, we fix once and for all a real number
$ \lambda_\beta<\min\{0,\inf\sigma(H_\beta)\}$
so negative that, for every $\lambda\le\lambda_\beta$, the expression
$ a_\beta[u]-\lambda\|u\|_{L^2(\Om)}^2$
is positive and equivalent to the squared $H^1(\Om)$ norm. Such parameters
automatically belong to $\rho(H_\beta)$. We shall call them
\emph{admissible parameters}.

\subsection{Point evaluation on the graph domain}

Let $Y=\{y_1,\dots,y_N\}\subset\Om$ be a finite set of pairwise distinct points. Define
\begin{equation*}
 \bftau_Yu:=(u(y_1),\dots,u(y_N)),
 \qquad u\in\Dom(H_\beta).
\end{equation*}
If $U\Subset\Om$ is a neighbourhood of $Y$, interior elliptic regularity gives
\begin{equation*}
 \|u\|_{H^2(U)}\le C\bigl(\|u\|_{L^2(\Om)}+\|H_\beta u\|_{L^2(\Om)}\bigr),
 \qquad u\in\Dom(H_\beta).
\end{equation*}
Since $d\in\{2,3\}$, the embedding $H^2(U)\hookrightarrow C^0(\overline U)$
shows that $\bftau_Y:\Dom(H_\beta)\to\C^N$ is bounded with respect to the
graph norm of $H_\beta$.
This is the finite-dimensional trace map used in extension constructions for
point perturbations (see for example 
\cite[Section~1.4.3]{BrueningGeylerPankrashkin2008} and
\cite{Posilicano2008}).

\subsection{Free resolvent}
For 
$\lambda<0,
$
let
\begin{equation*}
\kappa(\lambda):=\sqrt{-\lambda}>0 .
\end{equation*}
The free resolvent kernel of $-\Delta$ on $\R^d$ is denoted by $G^0_{d,\lambda}(x,y)=G^0_{d,\lambda}(x-y)$ and is given by
\begin{equation}
\label{eq:free-resolvent-kernel}
G_{3, \lambda}^{0}(x,y)=\frac{e^{-\kappa(\lambda) |x-y|}}{4\pi |x-y|},
 \qquad
G_{2, \lambda}^{0}(x,y)=\frac1{2\pi}K_0(\kappa(\lambda) |x-y|),
\end{equation}
where $K_0$ denotes the modified Bessel function. Thus
\begin{equation}
\label{eq:3dasymptotics}
G_{3, \lambda}^{0}(x,y)=\frac1{4\pi |x-y|}-\frac{\kappa(\lambda)}{4\pi}+O(|x-y|),
 \qquad x\to y,
\end{equation}
and
\begin{equation}
\label{eq:2dasymptotics}
G_{2, \lambda}^{0}(x,y)
 =-\frac1{2\pi}\log |x-y|
 -\frac1{2\pi}\Bigl(\log\frac{\kappa(\lambda)}{2}+\gamma\Bigr)+o(1),
 \qquad x\to y,
\end{equation}
where $\gamma$ is Euler's constant.
See, for example, \cite{AGHH} and \cite[Section~1.1]{DerezinskiPoint}.
\subsection{Green kernels, regular parts and Weyl matrix function}
\begin{theorem}[Robin resolvent kernel]\label{thm:robin-kernel}
Let $\Omega$ and $\beta$ satisfy the standing assumptions and let $\lambda\le\lambda_\beta$. For every $y\in\Omega$ there exists a unique $h^{\beta}_{\lambda,y}\in H^1(\Omega)$ satisfying
\begin{equation*}
\begin{cases}
(-\Delta-\la)h^{\beta}_{\la,y}=0 & \text{in }\Omega,\\
\tr_{\nu}{h_{\lambda,y}^{\beta}}+\beta\tr{h_{\lambda,y}^{\beta}}=\tr_{\nu}{G_{d,\lambda}^{0}(\cdot,y)}+\beta\tr{G_{d,\lambda}^{0}(\cdot,y)} & \text{on }\pa\Omega.
\end{cases}
\end{equation*}

Put
\begin{equation*}
 G^\beta_\lambda(x,y)
 :=
 G^0_{d,\lambda}(x,y)-h^\beta_{\lambda,y}(x) \qquad \text{and}
 \qquad
 G^\beta_{\lambda,y}:=G^\beta_\lambda(\cdot,y) ,  
\end{equation*}
respectively the Green kernel and the Green function with pole at $y$. Then
$G^\beta_{\lambda,y}\in L^2(\Omega)$ and
\begin{equation*}
 (-\Delta-\lambda)G^\beta_{\lambda,y}=\delta_y
\end{equation*}
in the distributional sense, with the Robin boundary condition. Moreover,
\begin{equation}\label{eq:green-vector-duality} 
\bigl((H_\beta-\lambda)^{-1}f\bigr)(y)
 =
 (f,G^\beta_{\lambda,y})_{L^2(\Omega)}.
\end{equation}
Thus $G^\beta_\lambda(x,y)$ is the integral kernel of
$(H_\beta-\lambda)^{-1}$. It is Hermitian symmetric and smooth away
from the diagonal and has the same diagonal singularity as the free kernel.
\end{theorem}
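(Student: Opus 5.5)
The plan is to obtain $h^\beta_{\lambda,y}$ by Lax--Milgram for the coercive form $a_\beta-\lambda$, and then identify $G^\beta_{\lambda,y}$ through Green's identity.

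\textbf{Step 1: removing the singularity.} Fix a cutoff $\chi\in C_c^\infty(\Omega)$ with $\chi\equiv1$ near $y$ and $\operatorname{supp}\chi$ at positive distance from $\partial\Omega$. Set $g:=(1-\chi)G^0_{d,\lambda}(\cdot,y)$. This function is smooth on $\overline\Omega$ and decays exponentially at infinity, because $\kappa(\lambda)>0$ and both $e^{-\kappa r}/r$ and $K_0(\kappa r)$ decay exponentially. Hence $g\in H^2(\Omega)$, $(-\Delta-\lambda)g=:F\in L^2(\Omega)$ (supported where $\nabla\chi\neq0$), and the Robin data of $g$ coincide with those of $G^0_{d,\lambda}(\cdot,y)$ on $\partial\Omega$.

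\textbf{Step 2: the variational problem.} The problem for $h^\beta_{\lambda,y}$ is equivalent to finding $w\in H^1(\Omega)$ with
\begin{equation*}
a_\beta[w,v]-\lambda(w,v)=\ell(v):=\int_\Omega\nabla g\cdot\overline{\nabla v}\,\dd x+\int_{\partial\Omega}\beta\,\tr g\,\overline{\tr v}\,\dd\sigma-\lambda(g,v),\qquad v\in H^1(\Omega).
\end{equation*}
Note that $\ell$ is bounded on $H^1(\Omega)$ by the trace theorem and $\beta\in L^\infty$. Since $\lambda\le\lambda_\beta$ is admissible, the form on the left is coercive, so Lax--Milgram gives a unique $w$. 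Testing with $v\in C_c^\infty(\Omega)$ shows $(-\Delta-\lambda)w=0$. The weak Green identity then identifies the Robin data: $\tr_\nu w+\beta\tr w=\tr_\nu g+\beta\tr g$ in $H^{-1/2}(\partial\Omega)$.

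For uniqueness, suppose two solutions differ by $u$. Then $u\in H^1$, $(-\Delta-\lambda)u=0$, and $u$ has zero Robin data. Green's identity gives $a_\beta[u]-\lambda\|u\|^2=0$, hence $u=0$. (Indeed $u\in\Dom(H_\beta)$ and $\lambda\in\rho(H_\beta)$.)

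\textbf{Step 3: identifying the Green function.} We have $G^\beta_{\lambda,y}=G^0-h$, which is in $L^2$ since $G^0\in L^2$ for $d\le3$. It solves $(-\Delta-\lambda)G^\beta_{\lambda,y}=\delta_y$ in $\mathcal D'(\Omega)$ and has zero Robin data. Near $y$, the function $h$ is smooth by interior elliptic regularity, since it solves a homogeneous equation; so the singularity is the free one. For the identity \eqref{eq:green-vector-duality}, let $u=(H_\beta-\lambda)^{-1}f$. Split $G^\beta_{\lambda,y}=\chi G^0+(g-h)$, where $g-h\in H^1$ has zero Robin data modulo the $\chi$ part. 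Then apply Green's identity on $\Omega\setminus B_\varepsilon(y)$ and let $\varepsilon\to0$. The small sphere contributes exactly $u(y)$: $u$ is continuous there, $\partial_r G^0\sim -c_d r^{1-d}$, and the other boundary term vanishes. The boundary terms on $\partial\Omega$ cancel because both $u$ and $G^\beta_{\lambda,y}$ satisfy the Robin condition, which gives $(f,G^\beta_{\lambda,y})=u(y)$. This pairing requires care, since $G^\beta_{\lambda,y}\notin H^1$. I would handle it by writing everything through the $H^1$ function $g-h$ plus the compactly supported $\chi G^0$, applying the weak Green identity to the $H^1$ piece and the classical one to the compact piece. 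I expect this bookkeeping to be the main technical point.

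\textbf{Step 4: kernel properties.} Since \eqref{eq:green-vector-duality} holds for all $f$, and $\lambda$ is real, $G^\beta_\lambda(x,y)$ is the kernel of the self-adjoint resolvent. Hermitian symmetry then follows from $(H_\beta-\lambda)^{-1}$ being self-adjoint, and in fact $G^\beta_\lambda$ is real here. Smoothness off the diagonal in $x$ follows from interior elliptic regularity; joint smoothness follows from symmetry together with hypoellipticity of $(-\Delta_x-\lambda)\otimes(-\Delta_y-\lambda)$ away from the diagonal.
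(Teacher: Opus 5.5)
There is a genuine error in Step~2, though it is easy to repair. Your functional is
\[
\ell(v)=a_\beta[g,v]-\lambda(g,v)_{L^2(\Omega)},
\]
and $g\in H^1(\Omega)$. By the uniqueness part of Lax--Milgram, the solution is therefore simply $w=g$. Testing with $v\in C^\infty_{\mathrm c}(\Omega)$ then gives $(-\Delta-\lambda)w=F$, not $0$. Moreover $F\not\equiv0$: otherwise $g$, extended to $\R^d$ by the same formula, would be an $H^2(\R^d)$ solution of $(-\Delta-\lambda)g=0$ and hence vanish, whereas $g=G^0_{d,\lambda}(\cdot,y)>0$ away from $\operatorname{supp}\chi$. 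So the claimed equivalence with the problem for $h^\beta_{\lambda,y}$ is false as written. The right-hand side must carry only the boundary data,
\[
\ell(v):=\bigl\langle(\tr_\nu+\beta\tr)g,\tr v\bigr\rangle_{\partial\Omega}
=a_\beta[g,v]-\lambda(g,v)_{L^2(\Omega)}-(F,v)_{L^2(\Omega)},
\]
which is still bounded on $H^1(\Omega)$. Equivalently, set $h^\beta_{\lambda,y}:=g-(H_\beta-\lambda)^{-1}F$. Then $(-\Delta-\lambda)h^\beta_{\lambda,y}=F-F=0$, and its Robin data are those of $g$, hence of $G^0_{d,\lambda}(\cdot,y)$. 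Your uniqueness argument goes through unchanged.

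The corrected form also removes what you call the main technical point of Step~3. Write $u=(H_\beta-\lambda)^{-1}f$.
\begin{itemize}
\item Since $g-h^\beta_{\lambda,y}=(H_\beta-\lambda)^{-1}F\in\Dom(H_\beta)$, self-adjointness gives $(f,g-h^\beta_{\lambda,y})=(u,F)$.
\item On a compact neighbourhood of $y$, the classical computation gives $(f,\chi G^0_{d,\lambda}(\cdot,y))=u(y)-(u,F)$. It uses $(-\Delta-\lambda)\bigl(\chi G^0_{d,\lambda}(\cdot,y)\bigr)=\delta_y-F$ and $u\in H^2_{\mathrm{loc}}\subset C^0$, and $F$ is real.
\end{itemize}
Adding the two yields \eqref{eq:green-vector-duality} with no boundary term on $\partial\Omega$ at all.

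There is a smaller slip in Step~4. The operator $(-\Delta_x-\lambda)\otimes(-\Delta_y-\lambda)$ is not hypoelliptic: it annihilates $a(x)b(y)$ for an arbitrary distribution $a$ whenever $(-\Delta-\lambda)b=0$. Use instead that, off the diagonal, $G^\beta_\lambda$ satisfies the elliptic equation $(-\Delta_x-\Delta_y-2\lambda)G^\beta_\lambda=0$ in $2d$ variables.

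With these corrections, your argument is a self-contained version of the paper's. The paper obtains the Robin correction from the boundary solvability lemma of Behrndt--Rohleder (solvability for general $H^{-1/2}$ Robin data) and omits the remaining details. Your cutoff $g$ is a convenient way to exhibit the Robin data of $G^0_{d,\lambda}(\cdot,y)$ as an honest element of $H^{-1/2}(\partial\Omega)$, including for unbounded boundaries.
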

Existence and uniqueness of the Robin correction follow from the boundary
solvability result \cite[Lemma~5.1]{BehrndtRohleder}. The remaining argument
parallels the analogous Dirichlet construction in
\cite{NojaRasoStoia2026}; for the universality of the diagonal singularity,
see \cite[Theorems~15--16]{BrueningGeylerPankrashkin2005}. Details are omitted.

\begin{definition}[Local charge and regular part]\label{def:local-regular-part}
Let $U\subset\Om$ be a neighbourhood of $y$. We denote by $\mathcal F_y(U)$
the class of functions, or distributions represented by functions on $U\setminus\{y\}$,
which have one of the following local expansions. In dimension three,
\begin{equation*}
 u(x)=\frac{q}{4\pi |x-y|}+u_{\reg}(x),
 \qquad u_{\reg}\in C^0(U),
\end{equation*}
whereas in dimension two,
\begin{equation*}
 u(x)=-\frac{q}{2\pi}\log |x-y|+u_{\reg}(x),
 \qquad u_{\reg}\in C^0(U).
\end{equation*}
The scalar $q$ is the charge of $u$ at $y$. The function $u_{\reg}$ is the
regular part of $u$ near $y$, and its value at the centre is
\begin{equation*}
 u_{\reg}(y):=\lim_{x\to y}
 \left(u(x)-\frac{q}{4\pi |x-y|}\right),
 \qquad d=3,
\end{equation*}
respectively
\begin{equation*}
 u_{\reg}(y):=\lim_{x\to y}
 \left(u(x)+\frac{q}{2\pi}\log |x-y|\right),
 \qquad d=2.
\end{equation*}
Here $u_{\reg}$ is part of the notation attached to the chosen singular expansion.
\end{definition}
The coefficient $q$ is uniquely determined by the leading asymptotics. If
$Y=\{y_1,\dots,y_N\}$, then $u\in\mathcal F_Y$ means that $u$ belongs locally
to $\mathcal F_{y_j}$ at each centre, and we write
\begin{equation*}
 u_{\reg,Y}:=\bigl(u_{\reg}(y_1),\dots,u_{\reg}(y_N)\bigr)
\end{equation*}
for the vector of regular-part values.
These charge and regular-part traces are the standard boundary data for point
interactions; see \cite{AGHH} and
\cite[Section~1.4.3]{BrueningGeylerPankrashkin2008}.

The regular-part value of $G^\beta_{\lambda,y}$ at the pole is
\begin{equation*}
 m^\beta_\lambda(y)
 :=(G^\beta_{\lambda,y})_{\reg}(y),
\end{equation*}
and \eqref{eq:3dasymptotics} gives
\begin{equation*}
 m^\beta_\lambda(y)
 =-\frac{\kappa(\lambda)}{4\pi}-h^\beta_{\lambda,y}(y),
 \qquad d=3,
\end{equation*}
and from \eqref{eq:2dasymptotics}
\begin{equation*}
 m^\beta_\lambda(y)
 =-\frac1{2\pi}\Bigl(\log\frac{\kappa(\lambda)}{2}+\gamma\Bigr)-h^\beta_{\lambda,y}(y),
 \qquad d=2.
\end{equation*}

For $Y=\{y_1,\dots,y_N\}$ define the renormalized Green matrix
\begin{equation}\label{eq:M-matrix}
 M^\beta_Y(\lambda):=\bigl(M^\beta_{jk}(\lambda)\bigr)_{j,k=1}^N
\end{equation}
by
\begin{equation*}
 M^\beta_{jj}(\lambda)=m^\beta_\lambda(y_j),
 \qquad
 M^\beta_{jk}(\lambda)=G^\beta_{\lambda,y_k}(y_j),\quad j\ne k.
\end{equation*}
The matrix is Hermitian for real $\lambda$ in the resolvent set.
The identity below is the point-interaction version of the standard
Weyl-function identity; compare
\cite[Definition~1.19 and Proposition~1.21]{BrueningGeylerPankrashkin2008}.

\begin{lemma}[Resolvent identity for regular parts]\label{lem:weyl-identity}
For $\lambda,\mu\le\lambda_\beta$ one has
\begin{equation}\label{eq:green-difference}
 G^\beta_{\lambda,y}-G^\beta_{\mu,y}
 =
 (\lambda-\mu)(H_\beta-\lambda)^{-1}G^\beta_{\mu,y}
 =
 (\lambda-\mu)(H_\beta-\mu)^{-1}G^\beta_{\lambda,y}.
\end{equation}
In particular, $G^\beta_{\lambda,y}-G^\beta_{\mu,y}\in\Dom(H_\beta)\subset H^1(\Om)$, and
\begin{equation}\label{eq:M-weyl-identity}
 M^\beta_Y(\lambda)-M^\beta_Y(\mu)
 =
 (\lambda-\mu)
 \bigl((G^\beta_{\mu,y_k},G^\beta_{\lambda,y_j})_{L^2(\Om)}\bigr)_{j,k=1}^N.
\end{equation}
\end{lemma}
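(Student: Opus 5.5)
The proof rests on the representation formula \eqref{eq:green-vector-duality} and on the fact that the free singularity cancels in the difference of two Green functions.

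\emph{Step 1: the difference lies in $H^1$.} By Theorem~\ref{thm:robin-kernel},
\[
 G^\beta_{\lambda,y}-G^\beta_{\mu,y}
 =\bigl(G^0_{d,\lambda}(\cdot,y)-G^0_{d,\mu}(\cdot,y)\bigr)-\bigl(h^\beta_{\lambda,y}-h^\beta_{\mu,y}\bigr).
\]
The free difference is smooth at $y$, since the singular parts in \eqref{eq:3dasymptotics}, \eqref{eq:2dasymptotics} coincide. It also decays exponentially at infinity, so it lies in $H^1(\Omega)$; indeed it lies in $H^2_{\mathrm{loc}}(\mathbb R^d)$. The $h$-terms are in $H^1(\Omega)$ by construction. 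Hence $w:=G^\beta_{\lambda,y}-G^\beta_{\mu,y}\in H^1(\Omega)$.

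\emph{Step 2: identify $w$ as a resolvent image.} Distributionally in $\Omega$,
\[
 (-\Delta-\lambda)w=\delta_y-\delta_y+(\lambda-\mu)G^\beta_{\mu,y}=(\lambda-\mu)G^\beta_{\mu,y}\in L^2(\Omega).
\]
The boundary condition $\tr_\nu w+\beta\tr w=0$ holds in $H^{-1/2}(\partial\Omega)$, because both Green functions satisfy the Robin condition. To make this precise, I would use that near $\partial\Omega$ each Green function equals an $H^1$ function whose weak normal derivative is defined by the Green identity of Section~\ref{sec:background}. The boundary conditions for $h^\beta_{\lambda,y}$ and $h^\beta_{\mu,y}$ then combine with those of the free kernels, which are smooth near $\partial\Omega$.

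It follows that $w\in\Dom(H_\beta)$ with $(H_\beta-\lambda)w=(\lambda-\mu)G^\beta_{\mu,y}$. Since $\lambda\in\rho(H_\beta)$, this gives the first equality in \eqref{eq:green-difference}. The second equality follows by exchanging the roles of $\lambda$ and $\mu$, which flips the sign on both sides.

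\emph{Step 3: the Weyl identity \eqref{eq:M-weyl-identity}.} For the entry $(j,k)$, evaluate $w_k:=G^\beta_{\lambda,y_k}-G^\beta_{\mu,y_k}$ at $y_j$.
\begin{itemize}
 \item For $j\neq k$, the value $w_k(y_j)$ is literally $M^\beta_{jk}(\lambda)-M^\beta_{jk}(\mu)$.
 \item For $j=k$, the charges of the two Green functions agree, so the singular parts cancel and the regular parts subtract. Hence $w_j(y_j)=m^\beta_\lambda(y_j)-m^\beta_\mu(y_j)$.
\end{itemize}
By Step 2, $w_k=(\lambda-\mu)(H_\beta-\lambda)^{-1}G^\beta_{\mu,y_k}$. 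Applying \eqref{eq:green-vector-duality} with $f=G^\beta_{\mu,y_k}$ gives
\[
 w_k(y_j)=(\lambda-\mu)\bigl(G^\beta_{\mu,y_k},G^\beta_{\lambda,y_j}\bigr)_{L^2(\Omega)},
\]
which is the claimed formula. Point evaluation is legitimate here since $w_k\in\Dom(H_\beta)$ is continuous by interior $H^2$ regularity.

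\emph{Main obstacle.} The delicate step is the rigorous verification of the Robin condition for $w$ in Step 2. The weak normal derivative is defined only for $H^1$ functions with $L^2$ Laplacian, while each Green function separately has a singularity at $y$. I would handle this with a cutoff $\chi$ equal to $1$ near $\partial\Omega$ and vanishing near $y$. One checks that $\chi G^\beta_{\lambda,y}$ satisfies the boundary condition, which it inherits from the defining problem for $h^\beta_{\lambda,y}$. Then $\tr_\nu w$ is computed from $\chi w$, which agrees with $w$ near $\partial\Omega$.
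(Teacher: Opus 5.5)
Your argument is correct, and your Step~3 is the paper's argument: evaluate the difference at $y_j$, note that equal charges turn the diagonal entry into a difference of regular parts, and apply \eqref{eq:green-vector-duality} with $f=G^\beta_{\mu,y_k}$.

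Where you differ is in how you obtain \eqref{eq:green-difference}. The paper reads it off from the abstract resolvent identity, transported through the duality \eqref{eq:green-vector-duality}. For every $f\in L^2(\Omega)$, using that $\mu$ is real,
\[
 \bigl(f,G^\beta_{\lambda,y}-G^\beta_{\mu,y}\bigr)_{L^2(\Omega)}
 =(\lambda-\mu)\bigl((H_\beta-\lambda)^{-1}(H_\beta-\mu)^{-1}f\bigr)(y)
 =(\lambda-\mu)\bigl(f,(H_\beta-\mu)^{-1}G^\beta_{\lambda,y}\bigr)_{L^2(\Omega)}.
\]
Commutativity of the resolvents gives the other form, and membership in $\Dom(H_\beta)$ is then automatic.

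You instead verify three things by hand: that $w\in H^1(\Omega)$, that $(-\Delta-\lambda)w=(\lambda-\mu)G^\beta_{\mu,y}$ in $\mathcal D'(\Omega)$, and that $w$ satisfies the Robin condition, using a cutoff and the locality of the weak normal derivative. This is sound. For \eqref{eq:green-difference} itself it needs only the PDE characterization of $G^\beta_{\lambda,y}$, namely the boundary problem for $h^\beta_{\lambda,y}$, and not the kernel identity. But it re-derives at the level of traces what Theorem~\ref{thm:robin-kernel} already packages into \eqref{eq:green-vector-duality}. In the duality route, the step you flag as the main obstacle disappears entirely.

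One small slip: $G^0_{d,\lambda}(\cdot,y)-G^0_{d,\mu}(\cdot,y)$ is not smooth at $y$. Its expansion contains a multiple of $|x-y|$ for $d=3$ and of $|x-y|^2\log|x-y|$ for $d=2$. It does lie in $H^2(\R^d)$, since it equals $(\lambda-\mu)(-\Delta-\lambda)^{-1}G^0_{d,\mu}(\cdot,y)$ with $G^0_{d,\mu}(\cdot,y)\in L^2(\R^d)$. That is all your Step~1 actually uses.
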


\begin{proof}
The identities in \eqref{eq:green-difference} are the resolvent identity applied to the distributional source $\delta_y$, or equivalently to the Green function characterized by \eqref{eq:green-vector-duality}. Since $(H_\beta-\lambda)^{-1}$ maps $L^2(\Om)$ into $\Dom(H_\beta)$, the difference belongs to $\Dom(H_\beta)$.

The difference of two Green functions has no singularity at the pole, so its value at $y_j$ is the difference of the corresponding regular parts when $j=k$ and the ordinary difference of off-diagonal values when $j\ne k$. Evaluating \eqref{eq:green-difference} at $y_j$ and using \eqref{eq:green-vector-duality} gives \eqref{eq:M-weyl-identity}.
\end{proof}

For $z\in\rho(H_\beta)$, choose an admissible $\lambda$ and define
\begin{equation*}
 G^\beta_{z,y}
 :=
 G^\beta_{\lambda,y}
 +(z-\lambda)(H_\beta-z)^{-1}G^\beta_{\lambda,y},
 \qquad
 \gamma_\beta(z)q:=\sum_{j=1}^Nq_jG^\beta_{z,y_j}.
\end{equation*}
The definition is independent of the admissible parameter. Let
$M^\beta_Y(z)$ be the matrix of regular diagonal values and off-diagonal
values of these Green functions, as in \eqref{eq:M-matrix}. Then
$M^\beta_Y(z)^*=M^\beta_Y(\overline z)$, and the resolvent identity gives
\begin{equation}\label{eq:complex-weyl-identity}
 M^\beta_Y(z)-M^\beta_Y(w)^*
 =(z-\overline w)\gamma_\beta(w)^*\gamma_\beta(z),
 \qquad z,w\in\rho(H_\beta).
\end{equation}
In particular,
\begin{equation*}
 \operatorname{Im}M^\beta_Y(z)
 =
 (\operatorname{Im}z)\gamma_\beta(z)^*\gamma_\beta(z).
\end{equation*}
The map $\gamma_\beta(z)$ is injective because its components have distinct
singularities. Thus $M^\beta_Y$ is a strict matrix-valued Nevanlinna function.
For every real $E\in\rho(H_\beta)$, differentiation of
\eqref{eq:complex-weyl-identity} gives
\begin{equation}\label{eq:weyl-derivative}
 \frac{\dd}{\dd E}M^\beta_Y(E)
 =
 \bigl((G^\beta_{E,y_k},G^\beta_{E,y_j})_{L^2(\Omega)}\bigr)_{j,k=1}^N
 =\gamma_\beta(E)^*\gamma_\beta(E)>0.
\end{equation}

Accordingly, we shall call
$M_Y^\beta:\rho(H_\beta)\to\mathbb C^{N\times N}$ the Weyl matrix
function associated with $H_\beta$ and the interaction set $Y$; for fixed
$z\in\rho(H_\beta)$, $M_Y^\beta(z)$ is the corresponding Weyl matrix.
Equations \eqref{eq:complex-weyl-identity} and
\eqref{eq:weyl-derivative} are the standard Weyl-function identities (compare
\cite[Theorem~1.23]{BrueningGeylerPankrashkin2008} and
\cite[Definition~2.2 and Lemma~2.3(i)]{BehrndtLangerLotoreichikRohleder2017}).

The following lemma provides the large negative energy information needed for the
spectral analysis. Together with the resolvent criterion
\eqref{eq:resolvent-criterion}, it places a whole negative half-line in
$\rho(H^\beta_{\Theta,Y})$ and hence proves lower semiboundedness. Its
positivity statement is used in
Corollary~\ref{cor:multicenter-count}, while the leading terms in the asymptotic developments yield the
strong-coupling asymptotics in Corollary~\ref{cor:strong-coupling}.

\begin{lemma}[Large negative energies]\label{lem:large-negative}
As $\kappa\to+\infty$,
\begin{equation}\label{eq:large-negative-M}
 M^\beta_Y(-\kappa^2)
 =
 \begin{cases}
 -\dfrac{\kappa}{4\pi}I_N+o(1),& d=3,\\[6pt]
 -\dfrac1{2\pi}\left(\log\dfrac{\kappa}{2}+\gamma\right)I_N+o(1),& d=2,
 \end{cases}
\end{equation}
in matrix norm. Consequently, for every self-adjoint $\Theta$ there exists $\kappa_0>0$ such that $\Theta-M^\beta_Y(-\kappa^2)$ is positive definite for $\kappa\ge\kappa_0$.
\end{lemma}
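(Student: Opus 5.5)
We need to show that $h^\beta_{-\kappa^2,y_j}(y_j)\to0$ and that the off-diagonal values $G^\beta_{-\kappa^2,y_k}(y_j)\to0$ as $\kappa\to\infty$. Granting these, the diagonal entries $m^\beta_{-\kappa^2}(y_j)$ reduce to the free values $-\kappa/(4\pi)$, respectively $-\frac1{2\pi}(\log\frac{\kappa}{2}+\gamma)$, by the formulas for $m^\beta_\lambda(y)$ preceding \eqref{eq:M-matrix}. That is exactly \eqref{eq:large-negative-M}. The positivity statement then follows at once: for fixed self-adjoint $\Theta$,
\begin{equation*}
\Theta-M^\beta_Y(-\kappa^2)\ge \bigl(c(\kappa)-\|\Theta\|-o(1)\bigr)I_N,
\end{equation*}
where $c(\kappa)=\kappa/(4\pi)$ or $\frac1{2\pi}(\log\frac\kappa2+\gamma)$. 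Both tend to $+\infty$, so the right-hand side is positive definite for $\kappa\ge\kappa_0$.

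\textbf{Off-diagonal terms.} Write $G^\beta_{\lambda,y_k}(y_j)=G^0_{d,\lambda}(y_j,y_k)-h^\beta_{\lambda,y_k}(y_j)$. The free part decays like $e^{-\kappa|y_j-y_k|}$ by \eqref{eq:free-resolvent-kernel}. Hence both the diagonal and off-diagonal claims reduce to one estimate: $h^\beta_{-\kappa^2,y}(x)\to0$ uniformly for $x,y$ in a compact neighbourhood $U\Subset\Omega$ of $Y$.

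\textbf{Estimating the correction $h$.} Let $\delta:=\dist(U,\partial\Omega)>0$. Fix a cutoff $\chi\in C^\infty_c(\Omega)$ with $\chi=1$ on the $\delta/3$-neighbourhood of $U$ and $\operatorname{supp}\chi$ contained in the $2\delta/3$-neighbourhood. Set $w:=h^\beta_{\lambda,y}-(1-\chi)G^0_{d,\lambda}(\cdot,y)$. The function $(1-\chi)G^0$ is smooth and has the same Robin boundary data as $G^0$, so $w\in\Dom(H_\beta)$. Moreover
\begin{equation*}
(H_\beta-\lambda)w=-\bigl[\Delta,\chi\bigr]G^0_{d,\lambda}(\cdot,y)=:f_\lambda .
\end{equation*}
The commutator is supported where $|x-y|\ge\delta/3$, so $\|f_\lambda\|_{L^2}\le C\kappa e^{-\kappa\delta/3}$, and similarly in $H^k$ norms. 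For admissible $\lambda=-\kappa^2$ we have $\|(H_\beta-\lambda)^{-1}\|\le (\kappa^2+\lambda_\beta)^{-1}$, which is small. Interior elliptic regularity on $U$, as in the point-evaluation estimate, together with $H^2(U)\hookrightarrow C^0$ gives
\begin{equation*}
\sup_U|w|\le C\bigl(\|w\|_{L^2}+\|H_\beta w\|_{L^2}\bigr)\le C'(1+\kappa^2)\|(H_\beta-\lambda)^{-1}f_\lambda\|_{L^2}+C\|f_\lambda\|_{L^2}\to0 .
\end{equation*}
Note that $H_\beta w=f_\lambda+\lambda w$, so the $\kappa^2$ factor is harmless against the exponential decay. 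Since $1-\chi=0$ on $U$, we get $h^\beta_{\lambda,y}=w$ on $U$, and the bound is uniform in $y\in U$, as required.

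\textbf{Main obstacle.} The hard step is controlling $h$ pointwise near the poles uniformly in $\kappa$. Any $\kappa$-dependence in the elliptic constants must be beaten by the exponential factor $e^{-\kappa\delta/3}$. The cutoff decomposition above gives exactly this, because the elliptic estimate is applied with fixed constants and $\lambda$ enters only through $\lambda w$. One alternative, a maximum principle comparison, is unavailable for general sign-changing $\beta$, which is why I would use the resolvent route. The 2D case runs identically, since $K_0$ also decays exponentially away from the pole.
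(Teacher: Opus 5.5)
Your proof is correct, but it obtains $h^\beta_{-\kappa^2,y_k}(y_j)\to0$ by a different route from the paper's.

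\textbf{The paper's route.} It works at the boundary. It bounds the Robin data $(\tr_\nu+\beta\tr)G^0_{d,-\kappa^2}(\cdot,y_j)$ pointwise by $C(1+\kappa)e^{-\kappa r}$. It then integrates this over $\pa\Omega$; for special domains this uses $\dd\sigma\le C\,\dd x'$ and an integral over $\R^{d-1}$. Next it passes to $H^{-1/2}(\pa\Omega)$. The uniform coercivity $a_\beta[v]+\kappa^2\|v\|^2\ge c_0\|v\|_{H^1}^2$ of the weak Robin problem then gives $\|h^\beta_{-\kappa^2,y_j}\|_{H^1(\Omega)}\lesssim(1+\kappa)e^{-\delta\kappa/2}$. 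Finally, interior estimates for $(-\Delta+\kappa^2)h=0$ are applied, at the cost of an unquantified polynomial factor in $\kappa$.

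\textbf{Your route.} You never touch $\pa\Omega$. Subtracting $(1-\chi)G^0_{d,\lambda}(\cdot,y)$ puts $w$ in $\Dom(H_\beta)$, with an exponentially small source supported in a fixed compact shell inside $\Omega$. The only global inputs are then the spectral bound $\|(H_\beta+\kappa^2)^{-1}\|\le(\kappa^2+\inf\sigma(H_\beta))^{-1}$ and the $\kappa$-independent graph-norm interior estimate already used for $\bftau_Y$. Writing $H_\beta w=f_\lambda+\lambda w$ makes the $\kappa$-dependence completely explicit.

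\textbf{Comparison.} Your approach is independent of the geometry of $\pa\Omega$. It needs no trace or duality estimates, no separate treatment of unbounded boundaries, and no coercivity constant. It therefore works verbatim for both domain classes and any real $\beta\in L^\infty(\pa\Omega)$. The paper's approach gives more than the lemma requires: exponential smallness of the whole correction $h^\beta_{-\kappa^2,y}$ in $H^1(\Omega)$.

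\textbf{Two points to make explicit.} First, $G^0_{d,\lambda}(\cdot,y)\notin H^1(\Omega)$, so the Robin data of $G^0$ in Theorem~\ref{thm:robin-kernel} must be read locally near $\pa\Omega$. This is exactly why $(1-\chi)G^0$ carries the same data: it lies in $H^1(\Omega)$ by exponential decay, and it coincides with $G^0$ near $\pa\Omega$ because $\chi$ has compact support in $\Omega$. Second, your bound $(\kappa^2+\lambda_\beta)^{-1}$ requires $\kappa^2>-\lambda_\beta$.
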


\begin{proof}
The off-diagonal free Green functions tend to zero exponentially because
the centres are distinct. Put $\delta=\dist(Y,\partial\Omega)>0$. The explicit
three-dimensional kernel and the large-argument estimates for $K_0$ and $K_1$
in dimension two, see \cite[Appendix~B]{DerezinskiPoint}, imply that, for
$\kappa\ge1$ and $r=|x-y_j|\ge\delta$,
\begin{equation*}
 \left|
 (\tr_\nu+\beta\tr)G^0_{d,-\kappa^2}(x,y_j)
 \right|
 \le C(1+\kappa)e^{-\kappa r},
 \qquad x\in\partial\Omega.
\end{equation*}
If $\Omega$ is an exterior domain, compactness of $\partial\Omega$ gives the
corresponding $L^2(\partial\Omega)$ estimate directly. If $\Omega$ is a
special domain, write $x=(x',\varphi(x'))$ and $y_j=(y_j',y_{j,d})$. Then
\begin{equation*}
 r\ge\delta,
 \qquad
 r\ge|x'-y_j'|,
 \qquad
 2r\ge\delta+|x'-y_j'|.
\end{equation*}
Moreover, the boundedness of $\nabla\varphi$ implies $\dd\sigma\le C\,\dd x'$.
Consequently,
\begin{equation*}
\begin{aligned}
 &\left\|
 (\tr_\nu+\beta\tr)G^0_{d,-\kappa^2}(\cdot,y_j)
 \right\|_{L^2(\partial\Omega)}^2 \\
 &\qquad\le
 C(1+\kappa)^2e^{-\delta\kappa}
 \int_{\mathbb R^{d-1}}e^{-\kappa|x'-y_j'|}\,\dd x'
 \le C(1+\kappa)^2\kappa^{-(d-1)}e^{-\delta\kappa}.
\end{aligned}
\end{equation*}
Since $L^2(\partial\Omega)$ embeds continuously into
$H^{-1/2}(\partial\Omega)$ in the global trace framework of
\cite[Section~2]{BehrndtRohleder}, both classes of domains satisfy
\begin{equation}\label{eq:boundary-data-large-kappa}
 \bigl\|(\tr_\nu+\beta\tr)
 G^0_{d,-\kappa^2}(\cdot,y_j)\bigr\|_{H^{-1/2}(\partial\Omega)}
 \le C(1+\kappa)e^{-\delta\kappa/2},
 \qquad j=1,\ldots,N.
\end{equation}
The constants may be chosen uniformly in $j$ because $Y$ is finite. For sufficiently large $\kappa$, the trace inequality and the lower
semiboundedness of the Robin form imply the uniform coercive estimate
\begin{equation*}
 a_\beta[v]+\kappa^2\|v\|_{L^2(\Omega)}^2
 \ge c_0\|v\|_{H^1(\Omega)}^2.
\end{equation*}
The weak Robin problem and \eqref{eq:boundary-data-large-kappa} thus yield
\begin{equation*}
 \|h^\beta_{-\kappa^2,y_j}\|_{H^1(\Omega)}
 \le C'(1+\kappa)e^{-\delta\kappa/2}.
\end{equation*}
On neighbourhoods compactly contained in $\Omega$, interior elliptic
estimates for $(-\Delta+\kappa^2)h=0$, followed by
$H^2\hookrightarrow C^0$ for $d\le3$, cost only an additional polynomial
factor in $\kappa$. Hence
\begin{equation*}
 h^\beta_{-\kappa^2,y_k}(y_j)=o(1)
\end{equation*}
uniformly in $j,k$. Formula \eqref{eq:large-negative-M} now follows from
\eqref{eq:3dasymptotics}--\eqref{eq:2dasymptotics}.
\end{proof}

\section{Point-interaction realization}\label{sec:operator}

Let $\Theta=\Theta^*\in\mathbb C^{N\times N}$. After the preparation in the preceding Section, we will realize the point interactions as self-adjoint extensions of suitable restrictions, making use of an ordinary boundary triple; see
\cite[Corollary~1.27, Proposition~1.28, and
Theorem~1.29]{BrueningGeylerPankrashkin2008} for the abstract construction
and Kre\u{\i}n formula. To this end, let us consider the closed symmetric restriction
\begin{equation}\label{eq:minimal-point-restriction}
 S_{\beta,Y}
 :=
 H_\beta\upharpoonright
 \bigl\{u\in\Dom(H_\beta):\bftau_Yu=0\bigr\}.
\end{equation}
The graph-continuity of $\bftau_Y$ was established in
Section~\ref{sec:preliminaries}. The map is onto, as is seen by choosing
functions in $C^\infty_{\mathrm c}(\Omega)$ with arbitrarily prescribed
values on $Y$. Hence $S_{\beta,Y}$ has deficiency indices $(N,N)$. For any
admissible real $\lambda$, one has the direct decomposition
\begin{equation}\label{eq:adjoint-domain-decomposition}
 \Dom(S_{\beta,Y}^*)
 =
 \Dom(H_\beta)\dotplus\gamma_\beta(\lambda)\mathbb C^N.
\end{equation}
Thus every $\psi\in\Dom(S_{\beta,Y}^*)$ is represented uniquely as
\begin{equation}\label{eq:operator-domain-decomp}
 \psi=\phi+\gamma_\beta(\lambda)q,
 \qquad \phi\in\Dom(H_\beta),\quad q\in\mathbb C^N,
\end{equation}
and
\begin{equation}\label{eq:adjoint-action}
 (S_{\beta,Y}^*-\lambda)\psi=(H_\beta-\lambda)\phi.
\end{equation}
Define
\begin{equation}\label{eq:point-boundary-maps}
 \Gamma_0\psi:=q,
 \qquad
 \Gamma_1\psi:=\bftau_Y\phi+M_Y^\beta(\lambda)q.
\end{equation}
The second expression is precisely the vector $\psi_{\reg,Y}$ of regular
values at the centres. It follows from
\eqref{eq:green-difference}--\eqref{eq:M-weyl-identity} that the boundary
maps do not depend on the admissible parameter used in
\eqref{eq:operator-domain-decomp}.

\begin{theorem}[Boundary-triple realization]\label{thm:domain-action}
The triple
\begin{equation*}
 \bigl(\mathbb C^N,\Gamma_0,\Gamma_1\bigr)
\end{equation*}
is an ordinary boundary triple for $S_{\beta,Y}^*$. Its reference extension
$S_{\beta,Y}^*\upharpoonright\Ker\Gamma_0$ is $H_\beta$, and its gamma field
and Weyl function are exactly $\gamma_\beta$ and $M_Y^\beta$ introduced in
Section~\ref{sec:preliminaries}.

For every $\Theta=\Theta^*$, the operator
\begin{equation}\label{eq:point-operator-boundary-triple}
 H^\beta_{\Theta,Y}
 :=
 S_{\beta,Y}^*
 \upharpoonright\Ker(\Gamma_1-\Theta\Gamma_0)
\end{equation}
is self-adjoint and bounded from below. Equivalently,
$\psi\in\Dom(H^\beta_{\Theta,Y})$ if and only if it has the decomposition
\eqref{eq:operator-domain-decomp} and
\begin{equation}\label{eq:point-boundary-condition}
 \bftau_Y\phi=(\Theta-M_Y^\beta(\lambda))q.
\end{equation}
Its action is given by
\begin{equation*}
 (H^\beta_{\Theta,Y}-\lambda)\psi=(H_\beta-\lambda)\phi.
\end{equation*}
In local terms, \eqref{eq:point-boundary-condition} is the charge--regular
part condition
\begin{equation}\label{eq:renormalized-bc}
 \psi_{\reg,Y}=\Theta q.
\end{equation}
\end{theorem}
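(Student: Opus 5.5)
We verify the defining properties of an ordinary boundary triple directly, using the decomposition \eqref{eq:operator-domain-decomp} with a fixed admissible real $\lambda$, and then invoke the abstract theory (\cite[Corollary~1.27, Proposition~1.28, Theorem~1.29]{BrueningGeylerPankrashkin2008}) for self-adjointness of $H^\beta_{\Theta,Y}$.

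\textbf{Surjectivity.} The map $(\Gamma_0,\Gamma_1):\Dom(S_{\beta,Y}^*)\to\mathbb C^N\times\mathbb C^N$ is onto. Given $(q,p)$, we take $q$ as the charge vector and choose $\phi\in\Dom(H_\beta)$ with $\bftau_Y\phi=p-M_Y^\beta(\lambda)q$. Such a $\phi$ exists because $\bftau_Y$ is onto, as noted before \eqref{eq:adjoint-domain-decomposition}.

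\textbf{Abstract Green identity.} For $\psi_i=\phi_i+\gamma_\beta(\lambda)q_i$ we must show
\begin{equation*}
(S^*\psi_1,\psi_2)-(\psi_1,S^*\psi_2)=(\Gamma_1\psi_1,\Gamma_0\psi_2)_{\mathbb C^N}-(\Gamma_0\psi_1,\Gamma_1\psi_2)_{\mathbb C^N}.
\end{equation*}
Because $\lambda$ is real, we may replace $S^*$ by $S^*-\lambda$. By \eqref{eq:adjoint-action} the left side becomes $((H_\beta-\lambda)\phi_1,\psi_2)-(\psi_1,(H_\beta-\lambda)\phi_2)$. Expanding $\psi_2$ and $\psi_1$, the $\phi_1,\phi_2$ terms cancel by self-adjointness of $H_\beta$. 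The cross terms are evaluated with \eqref{eq:green-vector-duality}: for $f=(H_\beta-\lambda)\phi$ we have $(f,G^\beta_{\lambda,y_j})=\phi(y_j)$. Hence the left side equals $(\bftau_Y\phi_1,q_2)-(q_1,\bftau_Y\phi_2)$. The right side equals the same expression plus $(M_Y^\beta(\lambda)q_1,q_2)-(q_1,M_Y^\beta(\lambda)q_2)$, and this extra term vanishes since $M^\beta_Y(\lambda)$ is Hermitian for real $\lambda$.

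\textbf{Identification of $\gamma$-field and Weyl function.} First, $\Ker\Gamma_0=\Dom(H_\beta)$, so the reference extension is $H_\beta$. Next we show $\Ker(S^*-z)=\gamma_\beta(z)\mathbb C^N$. By the definition of $G^\beta_{z,y}$ and \eqref{eq:green-difference} extended to complex $z$,
\begin{equation*}
G^\beta_{z,y}=G^\beta_{\lambda,y}+(z-\lambda)(H_\beta-z)^{-1}G^\beta_{\lambda,y},
\end{equation*}
so $\gamma_\beta(z)q=\phi+\gamma_\beta(\lambda)q$ with $\phi=(z-\lambda)(H_\beta-z)^{-1}\gamma_\beta(\lambda)q$. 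Then \eqref{eq:adjoint-action} gives
\begin{equation*}
(S^*-\lambda)\gamma_\beta(z)q=(H_\beta-\lambda)\phi=(z-\lambda)\gamma_\beta(z)q,
\end{equation*}
and therefore $(S^*-z)\gamma_\beta(z)q=0$. Since $\Gamma_0\gamma_\beta(z)q=q$, the gamma field is $\gamma_\beta$. Moreover $\Gamma_1\gamma_\beta(z)q=\bftau_Y\phi+M_Y^\beta(\lambda)q$, and by evaluating the same identity at the centres this is the vector of regular values, namely $M^\beta_Y(z)q$. So the Weyl function is $M_Y^\beta$.

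\textbf{Local form of the boundary condition.} We check that $\Gamma_1\psi=\psi_{\reg,Y}$. The part $\phi$ is continuous at each $y_j$. The function $\gamma_\beta(\lambda)q$ contributes $q_jm^\beta_\lambda(y_j)+\sum_{k\neq j}q_kG^\beta_{\lambda,y_k}(y_j)$ to the regular value at $y_j$. Together these give \eqref{eq:renormalized-bc} and \eqref{eq:point-boundary-condition}.

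\textbf{Self-adjointness and semiboundedness.} Self-adjointness of $H^\beta_{\Theta,Y}$ is the standard consequence of $\Theta=\Theta^*$ for an ordinary boundary triple. For semiboundedness we use the Kre\u{\i}n criterion: $z\in\rho(H_\beta)$ lies in $\rho(H^\beta_{\Theta,Y})$ if and only if $\Theta-M^\beta_Y(z)$ is invertible. By Lemma~\ref{lem:large-negative}, $\Theta-M^\beta_Y(-\kappa^2)$ is positive definite for $\kappa\ge\kappa_0$. Hence $(-\infty,-\kappa_0^2]\subset\rho(H^\beta_{\Theta,Y})$. Combined with the finite-rank resolvent difference, which rules out accumulation of spectrum from below, this gives a lower bound.

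\textbf{Main obstacle.} The delicate step is the Green identity computation, in particular justifying $(\gamma_\beta(\lambda)q,(H_\beta-\lambda)\phi)=\sum_k q_k\overline{\phi(y_k)}$ from \eqref{eq:green-vector-duality} with the paper's sesquilinear conventions. Closely tied to it is confirming that $\Dom(S^*)$ is exactly the direct sum \eqref{eq:adjoint-domain-decomposition}, which we take as given.
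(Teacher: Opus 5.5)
Your proposal is correct and follows essentially the same route as the paper. The shared steps are: Green's identity from the duality \eqref{eq:green-vector-duality}, where your explicit cancellation using the Hermitian $M^\beta_Y(\lambda)$ spells out the paper's one-line step; surjectivity via the ontoness of $\bftau_Y$; identification of $\gamma_\beta$ and $M^\beta_Y$ through \eqref{eq:green-difference}; and semiboundedness from Lemma~\ref{lem:large-negative} together with the resolvent criterion. You only show $\gamma_\beta(z)\mathbb C^N\subset\Ker(S_{\beta,Y}^*-z)$, but equality follows from injectivity of $\gamma_\beta(z)$ and the deficiency indices $(N,N)$. Your appeal to the finite-rank resolvent difference is unnecessary, since a negative half-line in $\rho(H^\beta_{\Theta,Y})$ already bounds the spectrum of a self-adjoint operator from below.
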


\begin{proof}
Green-function duality identifies
$\Ker(S_{\beta,Y}^*-z)$ with $\Ran\gamma_\beta(z)$. For
\begin{equation*}
 \psi=\phi+\gamma_\beta(\lambda)q,\qquad
 \eta=\chi+\gamma_\beta(\lambda)p,
\end{equation*}
\eqref{eq:green-vector-duality} gives
\begin{equation*}
 (S_{\beta,Y}^*\psi,\eta)-(\psi,S_{\beta,Y}^*\eta)
 =
 \langle\Gamma_1\psi,\Gamma_0\eta\rangle_{\mathbb C^N}
 -
 \langle\Gamma_0\psi,\Gamma_1\eta\rangle_{\mathbb C^N}.
\end{equation*}
The boundary map is onto: given $q,r\in\mathbb C^N$, choose
$\phi\in\Dom(H_\beta)$ such that
$\bftau_Y\phi=r-M_Y^\beta(\lambda)q$ and set
$\psi=\phi+\gamma_\beta(\lambda)q$. Then
$(\Gamma_0\psi,\Gamma_1\psi)=(q,r)$. The abstract construction cited above
therefore shows that these maps form an ordinary boundary triple. If
$\psi=\gamma_\beta(z)q$, then
\eqref{eq:green-difference} yields
$\Gamma_0\psi=q$ and $\Gamma_1\psi=M_Y^\beta(z)q$; hence the induced gamma
field and Weyl function are the stated ones.

The self-adjointness of \eqref{eq:point-operator-boundary-triple} is the
standard parametrization of self-adjoint extensions by Hermitian boundary
conditions. Equations \eqref{eq:point-boundary-condition} and
\eqref{eq:adjoint-action} give the domain and action, and
\eqref{eq:point-boundary-maps} gives \eqref{eq:renormalized-bc}. Finally,
Lemma~\ref{lem:large-negative} and the resolvent criterion below show that a
whole negative half-line belongs to $\rho(H^\beta_{\Theta,Y})$, proving
lower semiboundedness.
\end{proof}
\begin{remark}
A diagonal matrix
$\Theta=\diag(\alpha_1,\ldots,\alpha_N)$ describes local point
interactions with strengths $\alpha_j$, while a non-diagonal matrix also
couples the charges at distinct centres. \end{remark}
\begin{corollary}[Kre\u{\i}n formula and spectral consequences]\label{cor:krein-form}
For every $z\in\rho(H_\beta)$,
\begin{equation}\label{eq:resolvent-criterion}
 z\in\rho(H^\beta_{\Theta,Y})
 \quad\Longleftrightarrow\quad
 0\in\rho\bigl(\Theta-M^\beta_Y(z)\bigr).
\end{equation}
When these conditions hold,
\begin{equation}\label{eq:krein-form}
 (H^\beta_{\Theta,Y}-z)^{-1}
 =
 (H_\beta-z)^{-1}
 +
 \gamma_\beta(z)\bigl(\Theta-M^\beta_Y(z)\bigr)^{-1}
 \gamma_\beta(\overline z)^*.
\end{equation}
For $E\in\rho(H_\beta)\cap\R$,
\begin{equation}\label{eq:eigenvalue-criterion}
 \Ker(H^\beta_{\Theta,Y}-E)
 =
 \gamma_\beta(E)\Ker\bigl(\Theta-M^\beta_Y(E)\bigr),
\end{equation}
and the two kernels have the same dimension. Moreover,
\begin{equation}\label{eq:essential-spectrum}
 \sigma_{\mathrm{ess}}(H^\beta_{\Theta,Y})
 =
 \sigma_{\mathrm{ess}}(H_\beta).
\end{equation}
\end{corollary}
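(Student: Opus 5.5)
The whole corollary follows from the abstract theory of ordinary boundary triples. Theorem~\ref{thm:domain-action} already identifies $(\mathbb C^N,\Gamma_0,\Gamma_1)$ as an ordinary boundary triple for $S_{\beta,Y}^*$, with reference extension $H_\beta$, gamma field $\gamma_\beta$ and Weyl function $M_Y^\beta$. I will invoke \cite[Theorem~1.29]{BrueningGeylerPankrashkin2008} for the extension $H^\beta_{\Theta,Y}=S^*_{\beta,Y}\upharpoonright\Ker(\Gamma_1-\Theta\Gamma_0)$. To keep things self-contained, I will also verify the statements directly from the decomposition \eqref{eq:operator-domain-decomp} with the admissible parameter replaced by $z$. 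This replacement is legitimate because the boundary maps are independent of the parameter, and the decomposition $\Dom(S^*_{\beta,Y})=\Dom(H_\beta)\dotplus\gamma_\beta(z)\mathbb C^N$ holds for every $z\in\rho(H_\beta)$ by the same argument.

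\emph{Eigenvalue identity.} First take $E\in\rho(H_\beta)\cap\R$ and $\psi\in\Ker(H^\beta_{\Theta,Y}-E)$. Write $\psi=\phi+\gamma_\beta(E)q$, so that $(H_\beta-E)\phi=0$ and hence $\phi=0$. Then $\psi=\gamma_\beta(E)q$, and its boundary values are $\Gamma_0\psi=q$ and $\Gamma_1\psi=M^\beta_Y(E)q$. The boundary condition therefore reads $(\Theta-M^\beta_Y(E))q=0$. Conversely, every such $q$ gives an eigenfunction $\gamma_\beta(E)q$. Injectivity of $\gamma_\beta(E)$, which holds because the components have distinct singularities, gives equality of dimensions. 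The same argument with complex $z$ shows that $\Ker(H^\beta_{\Theta,Y}-z)\neq\{0\}$ if and only if $\Theta-M^\beta_Y(z)$ is singular.

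\emph{Resolvent criterion and Kre\u{\i}n formula.} For $f\in L^2$ I will seek $\psi=(H_\beta-z)^{-1}f+\gamma_\beta(z)q$. Then $(S^*_{\beta,Y}-z)\psi=f$ automatically, and the boundary condition becomes
\[
\bftau_Y(H_\beta-z)^{-1}f+M^\beta_Y(z)q=\Theta q .
\]
By \eqref{eq:green-vector-duality}, extended to complex $z$ through the definition of $G^\beta_{z,y}$, we have $\bftau_Y(H_\beta-z)^{-1}f=\gamma_\beta(\overline z)^*f$. If $\Theta-M^\beta_Y(z)$ is invertible, this determines $q$ uniquely and yields \eqref{eq:krein-form}. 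Together with injectivity from the previous step, this shows $z\in\rho(H^\beta_{\Theta,Y})$. If $\Theta-M^\beta_Y(z)$ is not invertible, a nontrivial kernel vector produces an eigenvector, so $z\notin\rho$. This proves \eqref{eq:resolvent-criterion}.

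\emph{Essential spectrum.} The main, though mild, point is extending the duality $\bftau_Y(H_\beta-z)^{-1}=\gamma_\beta(\overline z)^*$ to nonreal $z$. I will do this using the resolvent identity in the definition of $G^\beta_{z,y}$ and \eqref{eq:green-vector-duality} at an admissible $\lambda$. With that in hand, the resolvent difference in \eqref{eq:krein-form} has rank at most $N$ for any nonreal $z$. Weyl's theorem on compact resolvent perturbations then gives \eqref{eq:essential-spectrum}.
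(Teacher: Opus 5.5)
Your proposal is correct and follows the same route as the paper. The paper obtains the resolvent criterion, the Kre\u{\i}n formula and the kernel identity directly from the abstract Kre\u{\i}n formula for ordinary boundary triples (Theorem~1.29 of Br\"uning--Geyler--Pankrashkin), and then deduces \eqref{eq:essential-spectrum} from the rank-at-most-$N$ resolvent difference via Weyl's theorem. Your hands-on verification is sound and simply unpacks what that abstract theorem provides: the decomposition with parameter $z$, the identity $\bftau_Y(H_\beta-z)^{-1}=\gamma_\beta(\overline z)^*$ derived from the resolvent identity and \eqref{eq:green-vector-duality}, and solving the boundary condition for $q$.
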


\begin{proof}
The resolvent and kernel identities follow from the ordinary
boundary-triple Kre\u{\i}n formula
\cite[Theorem~1.29]{BrueningGeylerPankrashkin2008}. Since
$\gamma_\beta(z)$ maps $\mathbb C^N$ into $L^2(\Omega)$, the resolvent
difference has rank at most $N$. Weyl's theorem therefore gives
\eqref{eq:essential-spectrum}.
\end{proof}

\section{Subthreshold spectrum and coupling asymptotics}
\label{sec:spectrum}

The kernel identity \eqref{eq:eigenvalue-criterion} detects
eigenvalues at a prescribed energy. Combined with the strict energy
monotonicity \eqref{eq:weyl-derivative} of the Weyl matrix, it yields the
following global counting principle.

\begin{corollary}[Eigenvalue count below the background spectrum]\label{cor:multicenter-count}
For every $E_0<\inf\sigma(H_\beta)$,
\begin{equation}\label{eq:multicenter-count-below-energy}
 \dim\Ran\mathbf{1}_{(-\infty,E_0)}(H^\beta_{\Theta,Y})
 =n_-\bigl(\Theta-M^\beta_Y(E_0)\bigr),
\end{equation}
where $n_-(A)$ is the number of negative eigenvalues of a Hermitian matrix
$A$, counted with multiplicity. In particular, $H^\beta_{\Theta,Y}$ has at
most $N$ eigenvalues below $\inf\sigma(H_\beta)$.

Assume in addition that $H_\beta\ge0$. Then
\begin{equation}\label{eq:multicenter-negative-count-limit}
 \dim\Ran\mathbf{1}_{(-\infty,0)}(H^\beta_{\Theta,Y})
 =
 \lim_{\kappa\downarrow0}
 n_-\bigl(\Theta-M^\beta_Y(-\kappa^2)\bigr).
\end{equation}
The limit on the right-hand side always exists. If
$M^\beta_Y(-\kappa^2)$ converges in $\mathbb C^{N\times N}$ as
$\kappa\downarrow0$, denote its finite Hermitian limit by
\begin{equation*}
 M^\beta_Y(0):=\lim_{\kappa\downarrow0}M^\beta_Y(-\kappa^2).
\end{equation*}
Then
\begin{equation}\label{eq:multicenter-negative-count}
 \dim\Ran\mathbf{1}_{(-\infty,0)}(H^\beta_{\Theta,Y})
 =n_-\bigl(\Theta-M^\beta_Y(0)\bigr).
\end{equation}
In particular, if $\Theta=\alpha I_N$ and
$
 \mu_1\le\cdots\le\mu_N
$
are the eigenvalues of $M^\beta_Y(0)$, then
\begin{equation*}
 \dim\Ran\mathbf{1}_{(-\infty,0)}(H^\beta_{\alpha I_N,Y})
 =\#\{j:\mu_j>\alpha\}.
\end{equation*}
Thus the first negative eigenvalue appears when $\alpha<\mu_N$, while
further eigenvalues appear as $\alpha$ crosses the remaining eigenvalues
of $M^\beta_Y(0)$.
\end{corollary}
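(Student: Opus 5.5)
The plan is a monotonicity/eigenvalue-flow argument on the half-line $(-\infty,E_0]$, which lies in $\rho(H_\beta)$. Fix $E_0<\inf\sigma(H_\beta)$ and set $A(E):=\Theta-M^\beta_Y(E)$ for real $E\le E_0$. The first step uses \eqref{eq:weyl-derivative}, which shows that $A(E)$ is real analytic (being the restriction of a Nevanlinna function to a real interval of $\rho(H_\beta)$) and strictly decreasing: $A'(E)=-\gamma_\beta(E)^*\gamma_\beta(E)<0$. By analytic perturbation theory (Rellich), the eigenvalues of $A(E)$ can be arranged as real analytic branches $a_1(E),\dots,a_N(E)$. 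For each branch, with a normalized analytic eigenvector $v_j(E)$, one has $a_j'(E)=\langle A'(E)v_j,v_j\rangle<0$. So every branch is strictly decreasing.

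The second step brings in Lemma~\ref{lem:large-negative}: for $E\le -\kappa_0^2$ the matrix $A(E)$ is positive definite, so all $a_j(E)>0$ there. Since the branches are strictly decreasing, each branch crosses zero at most once on $(-\infty,E_0)$, and only from positive to negative values. By \eqref{eq:eigenvalue-criterion}, the eigenvalues of $H^\beta_{\Theta,Y}$ in $(-\infty,E_0)$ are exactly the $E$ with $\dim\Ker A(E)>0$, and the multiplicity equals the number of branches vanishing at $E$. Below $-\kappa_0^2$ there is no spectrum at all, by \eqref{eq:resolvent-criterion}. Since $\sigma_{\mathrm{ess}}(H^\beta_{\Theta,Y})=\sigma_{\mathrm{ess}}(H_\beta)$, the spectrum in $(-\infty,E_0)$ consists of isolated eigenvalues of finite multiplicity. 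Counting them with multiplicity therefore counts the branches with a zero in $(-\infty,E_0)$, which is exactly the number of $j$ with $a_j(E_0)<0$, that is, $n_-(A(E_0))$. Branches with $a_j(E_0)=0$ contribute the eigenvalue $E_0$ itself, which is correctly excluded from the half-open interval. The bound $N$ is then immediate. The care needed here is only to justify the analytic labelling of the branches; one could instead avoid Rellich by a min-max argument, since $E\mapsto A(E)$ is decreasing in the Loewner order, and each ordered eigenvalue $a_j(E)$ is then continuous and strictly decreasing.

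For the second part, assume $H_\beta\ge0$ and apply the first part with $E_0=-\kappa^2$. Then $\dim\Ran\mathbf 1_{(-\infty,-\kappa^2)}(H^\beta_{\Theta,Y})=n_-(\Theta-M^\beta_Y(-\kappa^2))$. The left side is monotone nondecreasing as $\kappa\downarrow0$, it is bounded by $N$, and it converges to $\dim\Ran\mathbf 1_{(-\infty,0)}$ by strong continuity of spectral projections, $\mathbf 1_{(-\infty,-\kappa^2)}\to\mathbf 1_{(-\infty,0)}$ strongly. Since these are integer-valued and bounded, the limit stabilizes; this gives both the existence of the limit and \eqref{eq:multicenter-negative-count-limit}.

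If $M^\beta_Y(-\kappa^2)\to M^\beta_Y(0)$, I must show $\lim n_-(A(-\kappa^2))=n_-(\Theta-M^\beta_Y(0))$, and I expect this to be the main obstacle, since $n_-$ is only lower semicontinuous. Continuity of the eigenvalues gives $\liminf\ge n_-(\Theta-M^\beta_Y(0))$. For the reverse inequality, note that each $a_j(-\kappa^2)$ decreases as $\kappa\downarrow0$ and converges to the $j$-th eigenvalue of $\Theta-M^\beta_Y(0)$. Hence if $a_j(-\kappa^2)<0$ for some $\kappa$, the limit eigenvalue is strictly smaller than $a_j(-\kappa^2)<0$, so it is negative. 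This yields \eqref{eq:multicenter-negative-count}. The scalar case follows directly, because the eigenvalues of $\alpha I_N-M^\beta_Y(0)$ are $\alpha-\mu_j$.
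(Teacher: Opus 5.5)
Your proposal is correct and follows essentially the same route as the paper: you use the Loewner monotonicity of $A(E)=\Theta-M^\beta_Y(E)$ from \eqref{eq:weyl-derivative}, positivity at very negative energies from Lemma~\ref{lem:large-negative}, a one-crossing count for the ordered eigenvalues, monotone convergence of the spectral projections with ranks bounded by $N$, and the comparison $A(-\kappa^2)\ge A(0)$ to exclude eigenvalues that reach zero only in the limit. The Rellich analytic labelling is unnecessary; the min--max alternative you mention is exactly the paper's argument.
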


\begin{proof}
Set $A(E)=\Theta-M^\beta_Y(E)$ for
$E<\inf\sigma(H_\beta)$, and denote its ordered eigenvalues by
\begin{equation*}
 a_1(E)\le\cdots\le a_N(E).
\end{equation*}
By Lemma~\ref{lem:large-negative}, all $a_j(E)$ are positive for
sufficiently negative $E$. If
$E_1<E_2<\inf\sigma(H_\beta)$, \eqref{eq:weyl-derivative} gives
\begin{equation*}
 A(E_1)-A(E_2)=M^\beta_Y(E_2)-M^\beta_Y(E_1)>0.
\end{equation*}
The finite-dimensional min--max principle therefore implies
$a_j(E_1)>a_j(E_2)$ for every $j$. By
\eqref{eq:eigenvalue-criterion}, the zeros of the $a_j$, counted with
their null multiplicities, are precisely the eigenvalues of
$H^\beta_{\Theta,Y}$ below the background spectrum. Each $a_j$ crosses
zero at most once. The number of crossings in $(-\infty,E_0)$ is therefore
$n_-(A(E_0))$; a zero eigenvalue of $A(E_0)$ is correctly excluded by the
open interval. This proves \eqref{eq:multicenter-count-below-energy}.
If $H_\beta\ge0$, take $E_0=-\kappa^2$ and let $\kappa\downarrow0$.
The spectral projections of $H^\beta_{\Theta,Y}$ associated with
$(-\infty,-\kappa^2)$ increase to the spectral projection associated with
$(-\infty,0)$. Since their ranks are bounded by $N$,
\eqref{eq:multicenter-count-below-energy} gives
\eqref{eq:multicenter-negative-count-limit}; in particular, the integer-valued
limit on its right-hand side exists. If $M^\beta_Y(-\kappa^2)$ converges to
the finite matrix $M^\beta_Y(0)$, the ordered eigenvalues of
$A(-\kappa^2)$ converge to those of $A(0)$. Moreover,
$A(-\kappa^2)>A(0)$ for every $\kappa>0$, so an eigenvalue that converges
to zero is positive before the limit and is correctly excluded. It follows
that the limit in \eqref{eq:multicenter-negative-count-limit} equals
$n_-(A(0))$, proving \eqref{eq:multicenter-negative-count}.
\end{proof}

Thus the discrete spectrum below the background spectrum is
controlled entirely by the finite-dimensional inertia of
$\Theta-M_Y^\beta(E)$. In particular, the point perturbation can produce at
most $N$ eigenvalues below $\inf\sigma(H_\beta)$. When $H_\beta\geq0$, the
limiting formula at zero counts all negative eigenvalues without requiring
the existence of a finite zero-energy Weyl matrix. If $M_Y^\beta(0)$ exists,
its eigenvalues are the successive critical values of the scalar coupling
$\alpha$ at which the number of negative eigenvalues changes. These statements
concern the spectrum strictly below zero; they do not determine whether a
zero-energy eigenfunction or resonance occurs at a critical coupling.

For $N>1$ and $E\in\rho(H_\beta)\cap\mathbb R$, the eigenvalue
condition is
$ \det\bigl(\Theta-M_Y^\beta(E)\bigr)=0,$
and the multiplicity is
$\dim\Ker\bigl(\Theta-M_Y^\beta(E)\bigr)$.
We next refine this general multicentre picture. We first give
the complete description of the eigenvalue branch for one centre and then
consider the strong-coupling regime for scalar multicentre interactions.
In the one-centre case $\Theta=\alpha\in\R$, the local condition is
$\psi_{\reg}(y)=\alpha q$ and \eqref{eq:krein-form} reduces, at real
$z\in\rho(H_\beta)\cap\rho(H^\beta_{\alpha,y})$, to
\begin{equation*}
 (H^\beta_{\alpha,y}-z)^{-1}
 =
 (H_\beta-z)^{-1}
 +
 \frac{|G^\beta_{z,y}\rangle\langle G^\beta_{z,y}|}
 {\alpha-m^\beta_z(y)}.
\end{equation*}
The formal value $\alpha=\infty$ means $q=0$ and yields the unperturbed operator $H_\beta$.

\begin{proposition}[Single-centre spectrum below the background spectrum]
\label{prop:single-center-spectrum}
Set
\begin{equation*}
 I_\beta:=(-\infty,\inf\sigma(H_\beta)).
\end{equation*}
For one centre, the function $E\mapsto m^\beta_E(y)$ is real analytic and
strictly increasing on $I_\beta$. The extended limit
\begin{equation*}
 \alpha_*^\beta(y)
 :=\lim_{E\uparrow\inf\sigma(H_\beta)}m^\beta_E(y)
 \in\R\cup\{+\infty\}
\end{equation*}
exists, and the range of $m^\beta_E(y)$ on $I_\beta$ is
$(-\infty,\alpha_*^\beta(y))$. The operator $H^\beta_{\alpha,y}$ has an
eigenvalue in $I_\beta$ if and only if
$\alpha<\alpha_*^\beta(y)$. In that case the eigenvalue $E(\alpha)$ is
unique and simple, the map
\begin{equation*}
 E:(-\infty,\alpha_*^\beta(y))\longrightarrow I_\beta
\end{equation*}
is real analytic and strictly increasing, and a normalized eigenfunction is
$
 \psi_\alpha
 =
 \frac{G^\beta_{E(\alpha),y}}
 {\|G^\beta_{E(\alpha),y}\|_{L^2(\Omega)}}.
 $
\end{proposition}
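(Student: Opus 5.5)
The plan is to reduce everything to the scalar function $f(E):=m^\beta_E(y)=M^\beta_Y(E)$ for $N=1$, $Y=\{y\}$, and then apply the one-dimensional versions of the results already established.

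\emph{Step 1: analyticity and monotonicity.}
Fix an admissible $\lambda$. The definition of $G^\beta_{z,y}$ gives, via \eqref{eq:green-difference} evaluated at $y$,
\begin{equation*}
 m^\beta_z(y)=m^\beta_\lambda(y)+(z-\lambda)\bigl((H_\beta-z)^{-1}G^\beta_{\lambda,y},G^\beta_{\overline\lambda,y}\bigr)_{L^2(\Omega)}.
\end{equation*}
This follows from \eqref{eq:green-vector-duality} applied to $f=G^\beta_{\lambda,y}\in L^2$, continued from admissible to general $z$. Since the resolvent is holomorphic on $\rho(H_\beta)\supset I_\beta$, $f$ is real analytic there. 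By \eqref{eq:weyl-derivative}, $f'(E)=\|G^\beta_{E,y}\|^2_{L^2(\Omega)}>0$, using injectivity of $\gamma_\beta$, i.e.\ $G^\beta_{E,y}\neq0$. Hence $f$ is strictly increasing.

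\emph{Step 2: the range of $f$.}
By monotonicity the limit $\alpha^\beta_*(y)$ exists in $\R\cup\{+\infty\}$. Lemma~\ref{lem:large-negative} gives $f(-\kappa^2)\to-\infty$ as $\kappa\to\infty$: in $d=3$ it behaves like $-\kappa/(4\pi)$, and in $d=2$ like $-\frac1{2\pi}\log\kappa$. Continuity and strict monotonicity then give the range $(-\infty,\alpha_*^\beta(y))$. The one point to check is that the range is open at the top, which is clear since the supremum of a strictly increasing function on an open interval is not attained.

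\emph{Step 3: eigenvalues.}
By \eqref{eq:eigenvalue-criterion} with $N=1$, $E\in I_\beta$ is an eigenvalue if and only if $\alpha=f(E)$. In that case the kernel is $\gamma_\beta(E)\C=\operatorname{span}\{G^\beta_{E,y}\}$, which gives simplicity and the stated normalized eigenfunction. Bijectivity of $f:I_\beta\to(-\infty,\alpha_*^\beta(y))$ then gives existence if and only if $\alpha<\alpha_*^\beta(y)$, together with uniqueness, with $E(\alpha)=f^{-1}(\alpha)$.

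Since $f'>0$, the real-analytic inverse function theorem shows that $E$ is real analytic and strictly increasing, with $E'(\alpha)=1/\|G^\beta_{E(\alpha),y}\|^2$. This is consistent with Corollary~\ref{cor:multicenter-count}, which allows at most one eigenvalue.

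The main, if mild, obstacle is justifying real analyticity of $f$ on all of $I_\beta$ rather than only at admissible parameters. This is where Step 1 relies on the representation through the holomorphic resolvent and the parameter independence of the definition of $G^\beta_{z,y}$.
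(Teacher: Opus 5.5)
Your proof follows the same route as the paper's: analyticity via the resolvent, monotonicity from \eqref{eq:weyl-derivative}, the range from Lemma~\ref{lem:large-negative} plus the monotone limit, then \eqref{eq:eigenvalue-criterion} and the analytic inverse function theorem. Steps~2 and~3 are correct. The one genuine error is that the representation formula displayed in Step~1 is false.

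Write $g:=(H_\beta-z)^{-1}G^\beta_{\lambda,y}\in\Dom(H_\beta)$. Then $m^\beta_z(y)=m^\beta_\lambda(y)+(z-\lambda)g(y)$. Applying \eqref{eq:green-vector-duality} to $f=(H_\beta-\lambda)g=G^\beta_{z,y}$ gives $g(y)=(G^\beta_{z,y},G^\beta_{\lambda,y})_{L^2(\Omega)}$. Your expression $(g,G^\beta_{\lambda,y})_{L^2(\Omega)}$ is instead $((H_\beta-\lambda)^{-1}g)(y)$, i.e.\ it carries one resolvent too many.

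The error shows up inside your own argument. Differentiating your formula at $z=\lambda$ gives $((H_\beta-\lambda)^{-1}G^\beta_{\lambda,y},G^\beta_{\lambda,y})_{L^2(\Omega)}$. That is half the second derivative of $E\mapsto m^\beta_E(y)$ at $E=\lambda$ (compare the proof of Corollary~\ref{cor:one-center-concavity}). It is not the value $\|G^\beta_{\lambda,y}\|^2_{L^2(\Omega)}$ that you then quote from \eqref{eq:weyl-derivative}. For the free kernel in $\R^3$, with $\kappa=\kappa(\lambda)$, these are $1/(32\pi\kappa^3)$ and $1/(8\pi\kappa)$ respectively.

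The correct identity is \eqref{eq:complex-weyl-identity} with $N=1$ and $w=\lambda$:
\begin{equation*}
\begin{aligned}
 m^\beta_z(y)
 &=m^\beta_\lambda(y)+(z-\lambda)\bigl(G^\beta_{z,y},G^\beta_{\lambda,y}\bigr)_{L^2(\Omega)}\\
 &=m^\beta_\lambda(y)+(z-\lambda)\|G^\beta_{\lambda,y}\|^2_{L^2(\Omega)}
 +(z-\lambda)^2\bigl((H_\beta-z)^{-1}G^\beta_{\lambda,y},G^\beta_{\lambda,y}\bigr)_{L^2(\Omega)}.
\end{aligned}
\end{equation*}
The map $z\mapsto G^\beta_{z,y}=G^\beta_{\lambda,y}+(z-\lambda)(H_\beta-z)^{-1}G^\beta_{\lambda,y}$ is holomorphic as an $L^2(\Omega)$-valued map on $\rho(H_\beta)\supset I_\beta$, and the scalar product is linear in its first slot. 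So the right-hand side is holomorphic there, which gives the real analyticity you need on all of $I_\beta$. With this display replaced, your proof is complete and agrees with the paper's.
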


\begin{proof}
The eigenvalue equation is
 $\alpha=m^\beta_{E(\alpha)}(y).$
The analyticity of the resolvent implies that of $m^\beta_E(y)$ on
$I_\beta$. By Lemma~\ref{lem:large-negative},
$m^\beta_E(y)\to-\infty$ as $E\to-\infty$, while monotonicity gives the
extended limit $\alpha_*^\beta(y)$ at the other endpoint. Hence the stated
range follows. Formula \eqref{eq:weyl-derivative} allows the
inverse-function theorem to be applied at every $E\in I_\beta$. The kernel formula
\eqref{eq:eigenvalue-criterion} shows that the resulting eigenvalue is
simple and gives the stated eigenfunction. Since the inverse of a strictly
increasing function is strictly increasing, the same holds for $E(\alpha)$.
\end{proof}

\begin{corollary}[Derivative and strict concavity]
\label{cor:one-center-concavity}
The one-centre eigenvalue branch satisfies
\begin{equation}\label{eq:one-center-coupling-derivative}
 \frac{\dd E}{\dd\alpha}
 =
 \frac1{\|G^\beta_{E(\alpha),y}\|_{L^2(\Omega)}^2}>0.
\end{equation}
Moreover, $E$ is strictly concave on
$(-\infty,\alpha_*^\beta(y))$, and
\begin{equation}\label{eq:one-center-second-derivative}
 \frac{\dd^2E}{\dd\alpha^2}
 =
 -\frac{
  2\bigl((H_\beta-E(\alpha))^{-1}G^\beta_{E(\alpha),y},
  G^\beta_{E(\alpha),y}\bigr)_{L^2(\Omega)}
 }{
  \|G^\beta_{E(\alpha),y}\|_{L^2(\Omega)}^6
 }
 <0.
\end{equation}
\end{corollary}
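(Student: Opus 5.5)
Differentiate the implicit eigenvalue equation $\alpha=m^\beta_{E(\alpha)}(y)$ from Proposition~\ref{prop:single-center-spectrum}. Write $f(E):=m^\beta_E(y)$ for $E\in I_\beta$; it is real analytic there. By \eqref{eq:weyl-derivative} with $N=1$,
\begin{equation*}
 f'(E)=\|G^\beta_{E,y}\|_{L^2(\Omega)}^2>0 .
\end{equation*}
Since $E(\alpha)$ is the analytic inverse of $f$, the inverse-function theorem gives $E'(\alpha)=1/f'(E(\alpha))$. This is \eqref{eq:one-center-coupling-derivative}.

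For the second derivative I will use $E''(\alpha)=-f''(E)\,E'(\alpha)/f'(E)^2=-f''(E)/f'(E)^3$, evaluated at $E=E(\alpha)$. It then remains to compute
\begin{equation*}
 f''(E)=\frac{\dd}{\dd E}\|G^\beta_{E,y}\|^2 .
\end{equation*}
By the definition of $G^\beta_{z,y}$ through an admissible $\lambda$, the map $E\mapsto G^\beta_{E,y}$ is $L^2$-valued real analytic on $I_\beta$. Differentiating $G^\beta_{E,y}=G^\beta_{\lambda,y}+(E-\lambda)(H_\beta-E)^{-1}G^\beta_{\lambda,y}$ and using \eqref{eq:green-difference} gives
\begin{equation*}
 \partial_E G^\beta_{E,y}=(H_\beta-E)^{-1}G^\beta_{E,y}.
\end{equation*}
The quickest check of this identity is to divide \eqref{eq:green-difference} by $\lambda-\mu$ and let $\mu\to\lambda$, using continuity of the resolvent. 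That step needs \eqref{eq:green-difference} for all real parameters in $\rho(H_\beta)$, not only admissible ones; it follows from the resolvent identity and the parameter-independence of the definition. Since $G^\beta_{E,y}$ is real-valued for real $E$ (the kernel is Hermitian symmetric and real), we get
\begin{equation*}
 f''(E)=2\operatorname{Re}\bigl((H_\beta-E)^{-1}G^\beta_{E,y},G^\beta_{E,y}\bigr)=2\bigl((H_\beta-E)^{-1}G^\beta_{E,y},G^\beta_{E,y}\bigr).
\end{equation*}
The inner product is real because the resolvent is self-adjoint at real $E$. Substituting $f''$ and $f'=\|G^\beta_{E,y}\|^2$ into the formula for $E''$ yields exactly \eqref{eq:one-center-second-derivative}.

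Strict negativity then comes from the sign of $f''$. For $E<\inf\sigma(H_\beta)$ the operator $(H_\beta-E)^{-1}$ is strictly positive: by the spectral theorem,
\begin{equation*}
 \bigl((H_\beta-E)^{-1}g,g\bigr)\ge \|H_\beta-E\|^{-1}\|g\|^2
\end{equation*}
when $H_\beta$ is bounded. In general I will instead use $\int(t-E)^{-1}\,\dd\mu_g(t)>0$ for $g\neq0$. Here $g=G^\beta_{E,y}\neq0$, since it has a nontrivial singularity at $y$. Hence $E''<0$ on the whole interval, and $E$ is strictly concave.

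The only delicate step is the justification of $\partial_E G^\beta_{E,y}=(H_\beta-E)^{-1}G^\beta_{E,y}$ in $L^2$ on all of $I_\beta$, not merely for admissible $E$. I would handle it by fixing one admissible $\lambda$ and differentiating the explicit formula $G^\beta_{E,y}=(H_\beta-\lambda)(H_\beta-E)^{-1}G^\beta_{\lambda,y}$, read as a bounded function of $H_\beta$, where analyticity of the resolvent makes the computation immediate.
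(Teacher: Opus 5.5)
Your proof is correct and follows the paper's argument. You differentiate $\alpha=m^\beta_{E(\alpha)}(y)$ using \eqref{eq:weyl-derivative}, derive $\partial_E G^\beta_{E,y}=(H_\beta-E)^{-1}G^\beta_{E,y}$ from the resolvent representation of the Green function, and apply $E''=-f''/(f')^3$; your extra justifications (differentiability on all of $I_\beta$ via $G^\beta_{E,y}=(H_\beta-\lambda)(H_\beta-E)^{-1}G^\beta_{\lambda,y}$, and strict positivity via the spectral measure rather than the moot bounded-operator bound) fill in details the paper leaves implicit.
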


\begin{proof}
Differentiating
$\alpha=m^\beta_{E(\alpha)}(y)$ and using
\eqref{eq:weyl-derivative} gives
\begin{equation*}
 1
 =
 \|G^\beta_{E(\alpha),y}\|_{L^2(\Omega)}^2E'(\alpha),
\end{equation*}
which proves \eqref{eq:one-center-coupling-derivative}. The
Green-function resolvent identity also gives
\begin{equation*}
 \frac{\dd}{\dd E}G^\beta_{E,y}
 =
 (H_\beta-E)^{-1}G^\beta_{E,y},
\end{equation*}
and consequently
\begin{equation*}
 \frac{\dd^2}{\dd E^2}m^\beta_E(y)
 =
 2\bigl((H_\beta-E)^{-1}G^\beta_{E,y},
 G^\beta_{E,y}\bigr)_{L^2(\Omega)}>0.
\end{equation*}
The second derivative of the inverse function is
\begin{equation*}
 E''(\alpha)
 =
 -\frac{(m^\beta_E(y))''}
 {\bigl((m^\beta_E(y))'\bigr)^3}
 \bigg|_{E=E(\alpha)},
\end{equation*}
which is \eqref{eq:one-center-second-derivative}.
\end{proof}

\begin{remark}[Charge monotonicity]
For the normalized eigenfunction, the charge satisfies
\begin{equation*}
 |q_\alpha|^2=E'(\alpha),
 \qquad
 \frac{\dd}{\dd\alpha}|q_\alpha|^2=E''(\alpha)<0.
\end{equation*}
Thus the singular charge decreases as the interaction becomes less
attractive; equivalently, the binding energy
$\inf\sigma(H_\beta)-E(\alpha)$ is strictly convex.
\end{remark}

\begin{corollary}[Universal multicentre strong-coupling asymptotics]
\label{cor:strong-coupling}
Let $\Theta=\alpha I_N$. For all sufficiently negative $\alpha$, the operator
$H^\beta_{\alpha I_N,Y}$ has exactly $N$ eigenvalues below
$\inf\sigma(H_\beta)$, counted with multiplicity. Denote them by
\begin{equation*}
 E_1(\alpha)\leq\cdots\leq E_N(\alpha).
\end{equation*}
Then, for every $j=1,\ldots,N$, as $\alpha\to-\infty$,
\begin{equation}\label{eq:strong-coupling-asymptotics}
 E_j(\alpha)
 =
 \begin{cases}
 -16\pi^2\alpha^2+o(|\alpha|),&d=3,\\[4pt]
 -4e^{-2\gamma}e^{-4\pi\alpha}\bigl(1+o(1)\bigr),&d=2.
 \end{cases}
\end{equation}
The remainder estimates are uniform in $j$. In particular, the leading terms
are independent of the domain, the interaction centres, and the Robin
coefficient, and coincide with the whole-space one-centre laws; see
\cite{AGHH,DerezinskiPoint}.
\end{corollary}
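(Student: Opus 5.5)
The plan is to combine the counting formula of Corollary~\ref{cor:multicenter-count} with the large negative energy expansion of Lemma~\ref{lem:large-negative}. Write $E=-\kappa^2$ and
\begin{equation*}
 M^\beta_Y(-\kappa^2)=f_d(\kappa)I_N+R(\kappa),\qquad \|R(\kappa)\|\to0\quad(\kappa\to\infty),
\end{equation*}
where $f_3(\kappa)=-\kappa/(4\pi)$ and $f_2(\kappa)=-\frac1{2\pi}(\log\frac{\kappa}{2}+\gamma)$. Both functions are continuous, strictly decreasing in $\kappa$, and tend to $-\infty$.

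\textbf{Step 1: exactly $N$ eigenvalues.} Fix any $E_0<\inf\sigma(H_\beta)$. Then $M^\beta_Y(E_0)$ is a fixed Hermitian matrix. For $\alpha<\min\operatorname{spec}M^\beta_Y(E_0)$ the matrix $\alpha I_N-M^\beta_Y(E_0)$ is negative definite. By \eqref{eq:multicenter-count-below-energy}, $H^\beta_{\alpha I_N,Y}$ then has exactly $N$ eigenvalues in $(-\infty,E_0)$. The same formula bounds the number below $\inf\sigma(H_\beta)$ by $N$, so there are exactly $N$.

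\textbf{Step 2: localization.} By \eqref{eq:eigenvalue-criterion}, $E_j(\alpha)=-\kappa_j^2$ exactly when $\alpha$ is an eigenvalue of $M^\beta_Y(-\kappa_j^2)$. As $\alpha\to-\infty$, each $E_j(\alpha)\le E_N(\alpha)\to-\infty$. To see this, apply Step 1 with $E_0$ arbitrarily negative: for $\alpha$ below $\min\operatorname{spec}M^\beta_Y(E_0)$, all $N$ eigenvalues lie below $E_0$. Hence $\kappa_j\to\infty$ uniformly in $j$.

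Weyl's perturbation inequality for Hermitian matrices shows that every eigenvalue of $M^\beta_Y(-\kappa^2)$ lies within $\|R(\kappa)\|$ of $f_d(\kappa)$. Therefore
\begin{equation*}
 \alpha=f_d(\kappa_j)+\varepsilon_j,\qquad |\varepsilon_j|\le\sup_{\kappa\ge\min_i\kappa_i}\|R(\kappa)\|\to0,
\end{equation*}
uniformly in $j$.

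\textbf{Step 3: inversion.} In $d=3$ we get $\kappa_j=-4\pi\alpha+O(\varepsilon_j)=4\pi|\alpha|+o(1)$. Squaring gives
\begin{equation*}
 E_j=-16\pi^2\alpha^2+o(|\alpha|).
\end{equation*}
In $d=2$ we get $\log(\kappa_j/2)+\gamma=-2\pi\alpha+o(1)$, so $\kappa_j=2e^{-\gamma}e^{-2\pi\alpha}(1+o(1))$. Squaring gives
\begin{equation*}
 E_j=-4e^{-2\gamma}e^{-4\pi\alpha}\bigl(1+o(1)\bigr).
\end{equation*}
Uniformity in $j$ follows because the bound on $\varepsilon_j$ does not depend on $j$.

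The main point needing care is Step 2. Divergence $\kappa_j\to\infty$ must be established before Lemma~\ref{lem:large-negative} can be used; the lemma's error term is only $o(1)$ as $\kappa\to\infty$. Step 1 applied with moving $E_0$ handles this. The remaining inversion is elementary. The term $o(|\alpha|)$ in $d=3$ comes from the cross term $8\pi|\alpha|\cdot o(1)$.
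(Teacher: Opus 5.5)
Your proposal is correct and follows the paper's proof essentially step by step: exactly $N$ eigenvalues from the counting formula at a fixed $E_0$, then divergence of $\kappa_j$, then the uniform expansion of Lemma~\ref{lem:large-negative}, then the elementary inversion. The only cosmetic differences are two. You obtain $\kappa_j\to\infty$ by re-applying the counting formula with a moving $E_0$, whereas the paper argues by contradiction using boundedness of $M^\beta_Y$ on compact subsets of $(-\infty,E_0]$. You also state the Weyl-inequality localization explicitly, where the paper leaves it implicit.
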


\begin{proof}
For $E<\inf\sigma(H_\beta)$, let
\begin{equation*}
 \mu_1(E)\leq\cdots\leq\mu_N(E)
\end{equation*}
be the ordered eigenvalues of the Hermitian matrix $M_Y^\beta(E)$.
Lemma~\ref{lem:large-negative} implies
\begin{equation}\label{eq:weyl-eigenvalues-large-negative}
 \mu_j(-\kappa^2)
 =
 \begin{cases}
 -\dfrac{\kappa}{4\pi}+o(1),&d=3,\\[6pt]
 -\dfrac{1}{2\pi}\left(\log\dfrac{\kappa}{2}+\gamma\right)+o(1),&d=2,
 \end{cases}
 \qquad \kappa\to+\infty,
\end{equation}
uniformly in $j$.

Fix $E_0<\min\{0,\inf\sigma(H_\beta)\}$. For all sufficiently negative $\alpha$,
\begin{equation*}
 n_-\bigl(\alpha I_N-M_Y^\beta(E_0)\bigr)=N,
\end{equation*}
because $\alpha<\mu_1(E_0)$. Corollary~\ref{cor:multicenter-count} therefore
shows that $H^\beta_{\alpha I_N,Y}$ has exactly $N$ eigenvalues below $E_0$,
counted with multiplicity. Since the same corollary bounds by $N$ the total
number of eigenvalues below $\inf\sigma(H_\beta)$, these are all the
eigenvalues below the background spectrum.

Write an arbitrary one of them as
$E_j(\alpha)=-\kappa_j(\alpha)^2$. Then
$\kappa_j(\alpha)\to+\infty$ as $\alpha\to-\infty$. Indeed, otherwise a
subsequence of $E_j(\alpha)$ would remain in a compact subset of
$(-\infty,E_0]$, on which $M_Y^\beta(E)$ is bounded, contradicting
$\alpha\in\sigma(M_Y^\beta(E_j(\alpha)))$. The eigenvalue criterion together
with \eqref{eq:weyl-eigenvalues-large-negative} now yields,
for some eigenvalue of $M_Y^\beta(-\kappa_j(\alpha)^2)$,
\begin{equation*}
 \alpha
 =
 \begin{cases}
 -\dfrac{\kappa_j(\alpha)}{4\pi}+o(1),&d=3,\\[6pt]
 -\dfrac{1}{2\pi}\left(\log\dfrac{\kappa_j(\alpha)}{2}+\gamma\right)
 +o(1),&d=2.
 \end{cases}
\end{equation*}
The error is uniform because $N$ is finite and the remainder in
Lemma~\ref{lem:large-negative} converges in matrix norm. Consequently,
\begin{equation*}
 \kappa_j(\alpha)=-4\pi\alpha+o(1),
 \qquad d=3,
\end{equation*}
whereas
\begin{equation*}
 \log\kappa_j(\alpha)
 =-2\pi\alpha+\log2-\gamma+o(1),
 \qquad d=2.
\end{equation*}
Since $E_j(\alpha)=-\kappa_j(\alpha)^2$, these relations give
\eqref{eq:strong-coupling-asymptotics}.
\end{proof}

\begin{corollary}[Isolated background ground state]
\label{cor:isolated-background-ground-state}
Assume that
\begin{equation*}
 E_\beta:=\inf\sigma(H_\beta)
\end{equation*}
is an isolated eigenvalue of $H_\beta$. Then $E_\beta$ is simple. Let
$\phi_\beta$ denote its normalized eigenfunction, chosen strictly positive in
$\Omega$. Then
\begin{equation*}
 \alpha_*^\beta(y)=+\infty.
\end{equation*}
Consequently, for every $\alpha\in\mathbb R$ the operator
$H^\beta_{\alpha,y}$ has a unique simple eigenvalue
$E(\alpha)<E_\beta$. Moreover,
\begin{equation}\label{eq:isolated-ground-state-decoupling}
 E_\beta-E(\alpha)
 =
 \frac{|\phi_\beta(y)|^2}{\alpha}
 +O(\alpha^{-2}),
 \qquad \alpha\to+\infty,
\end{equation}
and, for every $z\in\mathbb C\setminus\mathbb R$,
\begin{equation}\label{eq:norm-resolvent-decoupling}
 \lim_{\alpha\to+\infty}
 \bigl\|
 (H^\beta_{\alpha,y}-z)^{-1}-(H_\beta-z)^{-1}
 \bigr\|=0.
\end{equation}
\end{corollary}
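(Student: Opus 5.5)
We begin with simplicity of $E_\beta$. The Robin form $a_\beta$ is real, and for $u\in H^1(\Omega)$ we have $|u|\in H^1(\Omega)$ with $a_\beta[|u|]\le a_\beta[u]$. Since $\Omega$ is connected, the semigroup $e^{-tH_\beta}$ is positivity improving; equivalently, one can use the standard argument that a ground state satisfies $|u|\in\Ker(H_\beta-E_\beta)$ and is then strictly positive by the Harnack inequality for $-\Delta-E_\beta$ in $\Omega$. The Perron--Frobenius/Beurling--Deny theory then gives a simple ground state $\phi_\beta>0$ in $\Omega$. Since $\phi_\beta\in\Dom(H_\beta)$, interior regularity makes $\phi_\beta(y)$ well defined, and positivity gives $\phi_\beta(y)>0$.

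Next we study $m^\beta_E(y)$ as $E\uparrow E_\beta$. Fix an admissible $\lambda$. Then $G^\beta_{E,y}=G^\beta_{\lambda,y}+(E-\lambda)(H_\beta-E)^{-1}G^\beta_{\lambda,y}$. Let $P$ be the spectral projection onto $\phi_\beta$. Because $E_\beta$ is isolated, $(H_\beta-E)^{-1}(1-P)$ is analytic in a complex neighbourhood of $E_\beta$. By \eqref{eq:green-vector-duality}, $(G^\beta_{\lambda,y},\phi_\beta)=\bigl((H_\beta-\lambda)^{-1}\phi_\beta\bigr)(y)$ up to conjugation, that is $\phi_\beta(y)/(E_\beta-\lambda)$, with $\phi_\beta$ real. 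Evaluating the regular part at $y$ via Lemma~\ref{lem:weyl-identity}, the singular part of $m^\beta_E(y)$ is
\[
(E-\lambda)\frac{(G^\beta_{\lambda,y},\phi_\beta)\,\phi_\beta(y)}{E_\beta-E}
=\frac{E-\lambda}{E_\beta-\lambda}\cdot\frac{\phi_\beta(y)^2}{E_\beta-E}.
\]
This gives the Laurent expansion
\[
m^\beta_E(y)=\frac{\phi_\beta(y)^2}{E_\beta-E}+c_0+O(E_\beta-E),
\]
with $c_0$ real and the remainder analytic. Since $\phi_\beta(y)>0$, we get $\alpha_*^\beta(y)=+\infty$. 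Proposition~\ref{prop:single-center-spectrum} then gives, for every real $\alpha$, a unique simple eigenvalue $E(\alpha)<E_\beta$.

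For the asymptotics, write $\delta=E_\beta-E(\alpha)$, so that $E(\alpha)\uparrow E_\beta$ as $\alpha\to+\infty$, by monotonicity and because the range is all of $\mathbb R$. The eigenvalue equation reads $\alpha=\phi_\beta(y)^2/\delta+c_0+O(\delta)$. Hence $\delta\sim\phi_\beta(y)^2/\alpha$. Inverting,
\[
\delta=\frac{\phi_\beta(y)^2}{\alpha-c_0+O(\delta)}=\frac{\phi_\beta(y)^2}{\alpha}+O(\alpha^{-2}),
\]
which is \eqref{eq:isolated-ground-state-decoupling}.

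Finally, for \eqref{eq:norm-resolvent-decoupling} we use the Kre\u{\i}n formula \eqref{eq:krein-form} with $N=1$ at fixed nonreal $z$. The resolvent difference has norm $\|G^\beta_{z,y}\|\,\|G^\beta_{\bar z,y}\|/|\alpha-m^\beta_z(y)|$. Here $m^\beta_z(y)$ is a fixed complex number, with imaginary part nonzero by \eqref{eq:complex-weyl-identity}, so $\alpha-m^\beta_z(y)$ never vanishes. The difference is therefore $O(1/\alpha)\to0$.

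The main obstacle is the pole extraction in the second step. One must justify that the regular-part functional commutes with the spectral decomposition, i.e.\ that the reduced part $(1-P)$ contributes an analytic, bounded term to the value at $y$. We would handle this by applying the graph-bounded evaluation $\bftau_{\{y\}}$ to $(E-\lambda)(H_\beta-E)^{-1}(1-P)G^\beta_{\lambda,y}\in\Dom(H_\beta)$. This is analytic in $E$ near $E_\beta$ in graph norm, and that analyticity gives the claim.
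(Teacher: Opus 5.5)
Your proposal is correct and follows the same route as the paper. The shared steps are:
\begin{itemize}
\item positivity and simplicity of the ground state;
\item extraction of the pole $\phi_\beta(y)^2/(E_\beta-E)$ from $m^\beta_E(y)$ through the spectral projection onto $\phi_\beta$, which gives $\alpha_*^\beta(y)=+\infty$ and, via Proposition~\ref{prop:single-center-spectrum}, the eigenvalue for every $\alpha$;
\item inversion of the scalar equation $\alpha=m^\beta_{E(\alpha)}(y)$;
\item the rank-one Kre\u{\i}n formula for the norm-resolvent limit.
\end{itemize}

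The differences are only technical. You obtain positivity from the variational fact that $|u|$ is again a ground state plus Harnack, rather than from irreducibility via the Ouhabaz invariance criterion. You get analyticity of the remainder by applying the graph-bounded evaluation $\bftau_{\{y\}}$ to the graph-norm analytic family $(H_\beta-E)^{-1}(1-P)G^\beta_{\lambda,y}$, instead of using the Weyl identity. You invert the Laurent expansion by hand instead of by the analytic implicit-function theorem; this suffices for the $O(\alpha^{-2})$ remainder but does not produce the second-order coefficient.
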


\begin{proof} We prove the statements about the unperturbed operator $H_\beta$, because we do not have a precise reference justifying them.
Choose $c>0$ so large that the shifted form
$ a_{\beta,c}[u,v]
 :=a_\beta[u,v]+c(u,v)_{L^2(\Omega)}$
is non-negative. Since $|u|\in H^1(\Omega)$,
$|\nabla |u||\leq|\nabla u|$ almost everywhere, and
$\tr|u|=|\tr u|$, one has
\begin{equation*}
 a_{\beta,c}[|u|]\leq a_{\beta,c}[u],
 \qquad u\in H^1(\Omega).
\end{equation*}
The first Beurling--Deny criterion
(see \cite[Theorem~2.6]{Ouhabaz}) therefore shows that
$e^{-t(H_\beta+c)}$ is positivity preserving. Since
$e^{-t(H_\beta+c)}=e^{-ct}e^{-tH_\beta}$, the same is true of
$e^{-tH_\beta}$, and the two semigroups have the same invariant ideals. We verify irreducibility directly (notice that boundedness of $\Omega$ is not needed here).
Suppose that, for some measurable set $A\subset\Omega$, the ideal $L^2(A)$
is invariant under $e^{-tH_\beta}$. The form invariance criterion
\cite[Theorem~2.2]{Ouhabaz}, applied to the orthogonal projection given by
multiplication by $\mathbf 1_A$, implies that
\begin{equation}\label{eq:invariant-ideal-form-domain}
 \mathbf 1_Au\in H^1(\Omega)
 \qquad\text{for every }u\in H^1(\Omega).
\end{equation}
If $U\Subset\Omega$, choose $\eta\in C^\infty_{\mathrm c}(\Omega)$ such
that $\eta=1$ on a neighbourhood of $\overline U$. Taking $u=\eta$ in
\eqref{eq:invariant-ideal-form-domain} shows that
$\mathbf 1_A\in H^1(U)$; hence
$\mathbf 1_A\in H^1_{\mathrm{loc}}(\Omega)$. The Sobolev product rule and
$\mathbf 1_A^2=\mathbf 1_A$ give
\begin{equation*}
 (2\mathbf 1_A-1)\nabla\mathbf 1_A=0
 \qquad\text{almost everywhere in }\Omega.
\end{equation*}
Because $|2\mathbf 1_A-1|=1$ almost everywhere, it follows that
$\nabla\mathbf 1_A=0$. Connectedness of $\Omega$ then implies that
$\mathbf 1_A$ is constant almost everywhere. Thus either $A$ or
$\Omega\setminus A$ has measure zero, so the semigroup is irreducible.
For a positivity-preserving self-adjoint semigroup, irreducibility is
equivalent to
positivity improvement; see \cite[Corollary~2.11]{Ouhabaz}.
Since $E_\beta=\inf\sigma(H_\beta)$ is an eigenvalue, positivity improvement
implies that it is simple and that its normalized eigenfunction $\phi_\beta$
may be chosen positive almost everywhere; see
\cite[Theorem~XIII.44]{ReedSimonIV}. Interior elliptic regularity gives a
continuous representative of $\phi_\beta$, and the local Harnack inequality
then yields
\begin{equation*}
 \phi_\beta(x)>0 \qquad\text{for every }x\in\Omega.
\end{equation*}
In particular, $\phi_\beta(y)\ne0$.

We next identify the spectral pole seen by the Weyl function. Let $P_\beta$
be the orthogonal projection onto $\operatorname{span}\{\phi_\beta\}$,
that is,
\begin{equation*}
 P_\beta f=(f,\phi_\beta)_{L^2(\Omega)}\phi_\beta,
 \qquad Q_\beta:=I-P_\beta.
\end{equation*}
Since $E_\beta$ is isolated, the reduced resolvent
\begin{equation*}
 R_\beta^\perp(E)
 :=Q_\beta(H_\beta-E)^{-1}Q_\beta
\end{equation*}
extends analytically across $E_\beta$, and
\begin{equation}\label{eq:background-resolvent-pole}
 (H_\beta-E)^{-1}
 =\frac{P_\beta}{E_\beta-E}+R_\beta^\perp(E)
\end{equation}
for $E<E_\beta$ sufficiently close to $E_\beta$.
Fix $\mu<E_\beta$. The Green-function duality
\eqref{eq:green-vector-duality} and
$(H_\beta-\mu)^{-1}\phi_\beta=(E_\beta-\mu)^{-1}\phi_\beta$
give
\begin{equation}\label{eq:ground-state-green-pairing}
 (\phi_\beta,G^\beta_{\mu,y})_{L^2(\Omega)}
 =\frac{\phi_\beta(y)}{E_\beta-\mu}.
\end{equation}
Applying $P_\beta$ to the Green-function identity
\eqref{eq:green-difference}, and using
\eqref{eq:ground-state-green-pairing} (recalling the convention that the
scalar product is linear in its first argument), yields
\begin{equation*}
 P_\beta G^\beta_{E,y}
 =\frac{\overline{\phi_\beta(y)}}{E_\beta-E}\,\phi_\beta.
\end{equation*}
Consequently,
\begin{equation}\label{eq:green-function-ground-state-pole}
 G^\beta_{E,y}
 =\frac{\overline{\phi_\beta(y)}}{E_\beta-E}\,\phi_\beta
  +G^\perp_{E,y},
 \qquad G^\perp_{E,y}:=Q_\beta G^\beta_{E,y}.
\end{equation}
More explicitly,
\begin{equation*}
 G^\perp_{E,y}
 =Q_\beta G^\beta_{\mu,y}
 +(E-\mu)R_\beta^\perp(E)Q_\beta G^\beta_{\mu,y},
\end{equation*}
so $E\mapsto G^\perp_{E,y}$ extends analytically, as an $L^2(\Omega)$-valued
function, across $E_\beta$.

The first term on the right-hand side of
\eqref{eq:green-function-ground-state-pole} is smooth near $y$ and has zero
charge. Hence the canonical spatial singularity of $G^\beta_{E,y}$ remains
entirely in $G^\perp_{E,y}$: as $x\to y$,
\begin{equation*}
 G^\perp_{E,y}(x)
 =\begin{cases}
 \dfrac{1}{4\pi|x-y|}+O(1),&d=3,\\[6pt]
 -\dfrac{1}{2\pi}\log|x-y|+O(1),&d=2.
 \end{cases}
\end{equation*}
Equivalently, in the distributional sense,
\begin{equation*}
 (-\Delta-E)G^\perp_{E,y}
 =\delta_y-\overline{\phi_\beta(y)}\phi_\beta.
\end{equation*}
Thus regularization at $y$ according to
Definition~\ref{def:local-regular-part} removes the spatial singularity from
$G^\perp_{E,y}$ but leaves unchanged the regular, energy-dependent pole term
in \eqref{eq:green-function-ground-state-pole}. By linearity of the
regular-part trace,
\begin{equation}\label{eq:weyl-pole-background-ground-state}
 m^\beta_E(y)
 =\frac{|\phi_\beta(y)|^2}{E_\beta-E}+r_\beta(E;y),
 \qquad
 r_\beta(E;y):=(G^\perp_{E,y})_{\reg}(y).
\end{equation}

We show directly the real analyticity of the remainder, which does not follow
merely from applying point evaluation to an $L^2$-analytic family. The scalar
case of the Weyl identity \eqref{eq:M-weyl-identity} and
\eqref{eq:green-function-ground-state-pole} give
\begin{equation*}
\begin{aligned}
 m^\beta_E(y)-m^\beta_\mu(y)
 =(E-\mu)(G^\beta_{\mu,y},G^\beta_{E,y})_{L^2(\Omega)}
 =\frac{(E-\mu)|\phi_\beta(y)|^2}
         {(E_\beta-\mu)(E_\beta-E)}
   +(E-\mu)(G^\beta_{\mu,y},G^\perp_{E,y})_{L^2(\Omega)}.
\end{aligned}
\end{equation*}
Since
\begin{equation*}
 \frac{E-\mu}{(E_\beta-\mu)(E_\beta-E)}
 =\frac{1}{E_\beta-E}-\frac{1}{E_\beta-\mu},
\end{equation*}
the remainder in \eqref{eq:weyl-pole-background-ground-state} is
\begin{equation}\label{eq:weyl-pole-regular-remainder}
\begin{aligned}
 r_\beta(E;y)
 ={}m^\beta_\mu(y)
 -\frac{|\phi_\beta(y)|^2}{E_\beta-\mu}
 +(E-\mu)(G^\beta_{\mu,y},G^\perp_{E,y})_{L^2(\Omega)},
\end{aligned}
\end{equation}
which is real analytic near $E_\beta$.

Because $\phi_\beta(y)\ne0$, the coefficient of the pole in
\eqref{eq:weyl-pole-background-ground-state} is strictly positive. Hence
\begin{equation*}
 \lim_{E\uparrow E_\beta}m^\beta_E(y)=+\infty
 \qquad\text{and}\qquad
 \alpha_*^\beta(y)=+\infty.
\end{equation*}
Proposition~\ref{prop:single-center-spectrum} now shows that, for every
$\alpha\in\mathbb R$, $H^\beta_{\alpha,y}$ has a unique simple eigenvalue
$E(\alpha)<E_\beta$, determined by
\begin{equation}\label{eq:isolated-ground-state-eigenvalue-equation}
 \alpha=m^\beta_{E(\alpha)}(y).
\end{equation}
Since the Weyl function is strictly increasing and diverges at $E_\beta$,
its inverse satisfies
\begin{equation*}
 E(\alpha)\uparrow E_\beta
 \qquad\text{as }\alpha\to+\infty.
\end{equation*}

We make the inversion at infinity explicit. Set
\begin{equation*}
 A_y:=|\phi_\beta(y)|^2>0,
 \qquad s:=\alpha^{-1},
 \qquad \delta:=E_\beta-E.
\end{equation*}
For $s>0$ small, the eigenvalue equation is equivalent to
\begin{equation}\label{eq:implicit-decoupling-equation}
 F(s,\delta):=
 \delta-sA_y-s\delta\,r_\beta(E_\beta-\delta;y)=0.
\end{equation}
The function $F$ is real analytic near $(0,0)$ and satisfies
 $F(0,0)=0$ and
 $ \partial_\delta F(0,0)=1.$
The analytic implicit-function theorem therefore provides a unique analytic
solution $\delta=\delta(s)$ with $\delta(0)=0$. For $s>0$ this is the branch
$\delta(s)=E_\beta-E(1/s)$, by uniqueness of the eigenvalue. From
\eqref{eq:implicit-decoupling-equation},
\begin{equation*}
 \delta(s)
 =sA_y+s\delta(s)r_\beta(E_\beta-\delta(s);y),
\end{equation*}
and differentiation at $s=0$ gives $\delta'(0)=A_y$. Thus
$ \delta(s)=A_ys+O(s^2).$
Returning to $s=\alpha^{-1}$ proves
\eqref{eq:isolated-ground-state-decoupling}. In fact, writing
$r_0:=r_\beta(E_\beta;y)$ and substituting the first-order expansion into
\eqref{eq:implicit-decoupling-equation} gives the sharper relation
\begin{equation*}
 E_\beta-E(\alpha)
 =\frac{A_y}{\alpha}+\frac{A_yr_0}{\alpha^2}+O(\alpha^{-3}).
\end{equation*}

Finally, the norm-resolvent statement follows independently from the
point-interaction resolvent formula. For fixed
$z\in\mathbb C\setminus\mathbb R$, the scalar Kre\u{\i}n formula
\eqref{eq:krein-form} gives
\begin{equation*}
 (H^\beta_{\alpha,y}-z)^{-1}-(H_\beta-z)^{-1}
 =\gamma_\beta(z)\frac{1}{\alpha-m^\beta_z(y)}
  \gamma_\beta(\overline z)^*.
\end{equation*}
All quantities except $\alpha$ on the right-hand side are fixed. In
particular, $|\alpha-m^\beta_z(y)|\geq\alpha/2$ for all sufficiently large
$\alpha$, and hence
\begin{equation*}
 \bigl\|
 (H^\beta_{\alpha,y}-z)^{-1}-(H_\beta-z)^{-1}
 \bigr\|
 \leq
 \frac{\|\gamma_\beta(z)\|\,
       \|\gamma_\beta(\overline z)\|}
      {|\alpha-m^\beta_z(y)|}
 =O(\alpha^{-1}).
\end{equation*}
This proves \eqref{eq:norm-resolvent-decoupling}.
\end{proof}

\begin{remark}[Comparison with the bounded Dirichlet case]
\label{rem:comparison-lotoreichik-michelangeli}
Corollary~\ref{cor:isolated-background-ground-state} overlaps, in the bounded
Dirichlet setting, with some of the results proved in \cite{LotoMiche21}. More
precisely, for a bounded connected $C^\infty$ domain they prove that, for
every finite $\alpha$, there is exactly one simple eigenvalue below the first
Dirichlet eigenvalue and that this eigenvalue converges to the latter as
$\alpha\to+\infty$; see
\cite[Proposition~IV.1(i)--(iv)]{LotoMiche21}. The authors also prove convergence of
the point-interaction operators to the Dirichlet Laplacian in the strong
resolvent sense \cite[Proposition~III.3(v)]{LotoMiche21}. Their decoupling
argument uses monotone convergence of quadratic forms, whereas the present
proof is based on the pole of the background resolvent and the scalar Weyl
equation. It applies to the present unbounded Neumann and Robin setting
whenever the background spectral bottom is isolated, and yields both the
quantitative expansion \eqref{eq:isolated-ground-state-decoupling} and the
stronger norm-resolvent convergence
\eqref{eq:norm-resolvent-decoupling}.
\end{remark}
\section{Explicit radial models}\label{sec:models}

We conclude with two one-centre exterior models for which the zero-energy
Weyl limit, and hence the critical coupling, can be computed explicitly.
The Robin parameter is constant.

Let $\Omega_R=\{x\in\R^3:|x|>R\}$, $|y|=\rho>R$, and take a
constant Robin parameter $\beta\ge0$ on $|x|=R$. The zero-energy critical
value is
\begin{equation*}
 \alpha_c^{\beta,\Omega_R}(\rho)=
 \frac1{4\pi}
 \sum_{\ell=0}^{\infty}
 \frac{\ell-\beta R}{\ell+1+\beta R}
 \frac{R^{2\ell+1}}{\rho^{2\ell+2}} .
\end{equation*}
In particular,
\begin{equation*}
 \alpha_c^{D,\Omega_R}(\rho)=
 -\frac{R}{4\pi(\rho^2-R^2)},
 \qquad
 \alpha_c^{N,\Omega_R}(\rho)=
 \frac1{4\pi}\sum_{\ell=1}^{\infty}
 \frac{\ell}{\ell+1}\frac{R^{2\ell+1}}{\rho^{2\ell+2}}>0.
\end{equation*}
The exterior sphere therefore confirms that Neumann boundary conditions may shift the critical coupling to positive values. At threshold the corresponding zero-energy state has a monopole tail of order $|x|^{-1}$ and is not in $L^2(\Omega_R)$. Here and below, ``threshold resonance'' means such a non-$L^2$ distributional zero-energy solution.

For the exterior disk $\Omega_R=\{x\in\R^2:|x|>R\}$,
$|y|=\rho>R$, the Dirichlet critical value is
\begin{equation*}
 \alpha_c^{D,\Omega_R}(\rho)=
 \frac1{2\pi}\log\frac{\rho^2-R^2}{R}.
\end{equation*}
For a finite constant Robin parameter $\beta>0$, separation into angular
Fourier modes gives the convergent series
\begin{equation}\label{eq:exterior-disk-robin-critical}
 \alpha_c^{\beta,\Omega_R}(\rho)
 =\frac1{2\pi}
 \left[
  \log\frac{\rho^2}{R}+\frac1{\beta R}
  +\sum_{n=1}^\infty
  \frac{n-\beta R}{n(n+\beta R)}
  \left(\frac R\rho\right)^{2n}
 \right].
\end{equation}
Indeed, the addition theorem for $K_0$ diagonalizes the correction problem
in angular Fourier modes. For every $n\ge1$, the contribution of the pair
of modes $\pm n$ to the limit of the regular part is
\begin{equation*}
 \frac1{2\pi n}\frac{n-\beta R}{n+\beta R}
 \left(\frac R\rho\right)^{2n},
\end{equation*}
whereas the zeroth mode, combined with the divergent free diagonal term,
has the finite limit
\begin{equation*}
 \frac1{2\pi}\left(\log\frac{\rho^2}{R}+\frac1{\beta R}\right).
\end{equation*}
Their sum is \eqref{eq:exterior-disk-robin-critical}.
As $\beta\to+\infty$, the series tends to
$\log(1-R^2/\rho^2)$ and \eqref{eq:exterior-disk-robin-critical} reduces
to the Dirichlet value. As $\beta\downarrow0$, the term
$(\beta R)^{-1}$ makes the critical value diverge to $+\infty$. For
Neumann boundary conditions the Weyl function itself diverges to
$+\infty$ as $\kappa\downarrow0$, so there is no finite critical coupling.
Thus the Neumann case is the transition between finite-coupling
threshold crossing and threshold approach at infinite coupling. If negative
Robin parameters are also considered, the interpretation becomes even
sharper. For $\beta<0$, the radial function $K_0(\kappa r)$ is an $L^2$
eigenfunction precisely when
\begin{equation*}
 \kappa K_1(\kappa R)+\beta K_0(\kappa R)=0.
\end{equation*}
This equation has a positive solution for every $\beta<0$, since the
continuous function
$\kappa\mapsto\kappa K_1(\kappa R)/K_0(\kappa R)$ tends to $0$ as
$\kappa\downarrow0$ and to $+\infty$ as $\kappa\to+\infty$. Since the
essential spectrum remains $[0,+\infty)$, the
exterior-disk Robin background has an isolated negative ground state.
Corollary~\ref{cor:isolated-background-ground-state} therefore applies, and
the point-interaction eigenvalue converges to that background eigenvalue.
Hence $\beta=0$ separates the isolated-ground-state regime from the
finite-critical-coupling regime.
\section*{Acknowledgements}
The first author gratefully acknowledges the support of INdAM--GNFM and of the Next Generation EU--PRIN 2022 project ``Singular Interactions and Effective Models in Mathematical Physics'' (2022CHELC7).

\end{document}